\begin{document}
\renewcommand{\d}{\mathrm{d}}
\newtheorem{thm}{Theorem}[section]
\newtheorem{lemma}[thm]{Lemma}
\newtheorem{propos}[thm]{Proposition}
\newtheorem{corol}[thm]{Corollary}
\numberwithin{equation}{section}
\newcommand{\expct}{\mbox{E}}
\renewcommand{\H}{\mathscr{H}}
\newcommand{\F}{\mathcal{F}}
\newcommand{\M}{\mathcal{M}}

\title{From microscopic theory to macroscopic theory: symmetries and order parameters of rigid molecules}
\author{Jie Xu$^{*}$ and Pingwen Zhang$^{*}$\\[2mm]
{\small LMAM \& School of  Mathematical Sciences, Peking University, China}\\
{\small $^*$ E-mail: rxj\_2004@126.com,\, pzhang@pku.edu.cn}\\
}
\date{\today}
\maketitle

\begin{abstract}
We use density functional theory to describe the phase behaviors of rigid molecules.
The construction of kernel function $G(\bm{x},P,\bm{x'},P')$ is discussed. 
Excluded-volume potential is calculated for two types of molecules with 
$C_{2v}$ symmetry. 
Molecular symmetries lead to the symmetries of $G$ and density function 
$f(P)$, enabling a reduction of configuration space. 
By approximating $G$ with a polynomial, 
the system can be fully characterized by some moments corresponding to the form of $G$. 
The symmetries of $G$ determine the form of the polynomial, 
while the coefficients are determined by temperature and molecular parameters. 
The analysis of the impact of coefficients helps us to choose 
independent variables in the moments as order parameters. 
Order parameters for bent-core molecules are predicted. 
\end{abstract}
\tableofcontents
\setlength\arraycolsep{2pt}
\section{Introduction}
Non-spherical rigid molecules may show orientationally ordered phases, 
which are often referred to as liquid crystalline(LC) phases. 
Among all rigid molecules, (nonpolar) rod-like molecules have attracted most 
interests. Rod-like molecules have shown several LC phases experimentally, 
including nematic, smectic A and smectic C. The shape of rigid molecules can be 
more complex, which may induce richer phase behaviors. In the recent two 
decades, a novel type of molecules has occupied a place in the 
research of liquid crystals. This type of molecules can be represented by two 
rigid rods connected end to end with a fixed angle, thus called bent-core 
molecules. Bent-core molecules break the rotational symmetry of rod-like 
molecules, and have proven to exhibit numerous new liquid crystalline phases\cite{JJAP}. 
A few rigid molecules of other architectures have also been studied and the 
experimental results indicate more complex phases\cite{NT}. 

To describe LC phases, orientation-dependent variables are necessary to be 
included in the free energy. 
The statistical mechanics of nematic phase of rods gives the free energy
\begin{equation}\label{FreeEng0}
  F[f]=F_0+k_BT\left(\int_{S^2}\d\bm{m} f(\bm{m})\log f(\bm{m}) +
\frac{c}{2} \int\int_{S^2\times S^2} \d\bm{m}\d\bm{m'}f(\bm{m})G(\bm{m},\bm{m'})f(\bm{m'})
\right)
\end{equation}
where $f(\bm{m})$ is orientational probability density function, 
$G(\bm{m},\bm{m'})$ is kernel function, and $c$ is concentration. 
Two of the most well-known $G$ are Onsager potential\cite{Ons} 
\begin{equation}\label{Onsager}
2L^2D|\bm{m}\times\bm{m'}|
\end{equation}
for a rod of length $L$ and diameter $D$, and Maier-Saupe potential\cite{M_S}
\begin{equation}\label{MS}
C|\bm{m}\times\bm{m'}|^2=C-C(\bm{m}\cdot\bm{m'})^2
\end{equation}
with $C$ related to $L$, $D$ and temperature $T$. 
The top eigenvalue of the second moment of $\bm{m}$, 
\begin{equation}
S=\left<\bm{mm}\right>, 
\end{equation}
is defined as the order parameter of the uniaxial nematic phase\cite{LiqCryst}, 
which is the only spatially homogeneous phase found for rod-like molecules 
other than isotropic phase. 

The free energy (\ref{FreeEng0}) is a natural extension of virial expansion of spheres. 
When handling a generic rigid molecule, a three-dimensional rotation 
$P\in SO_3$ is necessary to describe its orientation. Thus we need to 
substitute $\bm{m}$ with $P$ in (\ref{FreeEng0}). Kernel function $G(P,P')$ 
can be deduced from pairwise interaction of molecules. 
Different phases correspond to different local minima of the free energy. 
However, it is obscure to distinguish phases with the probability density 
function $f$. One always wants to seek a few order parameters to classify them, 
like the eigenvalues of $\left<\bm{mm}\right>$ for nematic phase of rods. 

In the existing approaches, order parameters are usually considered at first. 
Models at different levels are constructed about these order parameters. 
For example, in \cite{OrdPar, BiLand} Landau-type free energies are constructed 
for molecules with $C_{2v}$ and $D_{2h}$ symmetries. 
In \cite{Bi1} a molecular theory of $D_{2h}$ is dicussed. Four order parameters 
are proposed. The kernel function there is a polynomial of 
$\bm{m}_i\cdot\bm{m'}_j$. When solving the model, further assumptions are made 
to deduce the equations of the four order parameters. 

The purpose of this article is to present a procedure of reducing $f$ to 
a few order parameters. 
In this procedure, symmetries of molecules play a key role. These symmetries 
will be inherited by kernel function $G$ and probability density function $f$. 
The symmetries of $G$ and $f$ make it possible to reduce the configuration 
space. As an example, the reduction will derive (\ref{FreeEng0}) for molecules 
with axial symmetry. 

The next step is to look for a good approximation of $G$. 
Here we are partially inspired by some thoughts in \cite{Bi1}. 
We will prove that 
$G$ is a function of $\bar{P}=P^{-1}P'=(p_{ij})_{3\times 3}$, and approximate $G$
with a polynomial of $p_{ij}$. The advantage of polynomial approximation is 
that the Euler-Langrange equation of $f$ could be replaced with self-consistent 
equations of several moments of $\bm{m}_i$ that fully characterize the system. 
The symmetries of $G$ determine the form of approximate kernel function. 
In other words, the symmetries of $G$ determine the candidate moments. 
Truncation within the remaining terms is followed, which 
relies on intuitions from experiments and simulations. Maier-Saupe potential 
is obtained spontaneously after this step for molecules with $D_{\infty h}$ 
symmetry, and the form of approximation is derived for molecules with $C_{2v}$ 
symmetry. 

The coefficients of polynomial approximation of $G$ are determined by
molecular parameters and temperature. The analysis of the impacts of these 
coefficients might further reveal some properties of the chosen moments. 
Analysis of this type has been done for rods\cite{AxiSymMS,AxiSym2,AxiSym3} and polar rods\cite{Dipol}. 
We will present some results on molecules with $C_{2v}$. 
The analysis would enable us to find independent 
variables in these moments, which are chosen as order parameters. 
From the implications of the analysis and the experimental results, 
we predict that five order parameters are enough for bent-core molecules. 

The rest of this paper is organized as follows. Sec.\ref{Model} describes 
the density functional theory of generic rigid molecules. 
The construction of kernel function $G$ is discussed. Some simple properties
are presented. Excluded-volume potential is derived for two types of 
molecules with $C_{2v}$ symmetry. 
In Sec.\ref{Sy}, we analyze the symmetric properties of $G$ and $f$ and 
describe the reduction of configuration space. 
Sec.\ref{Trunc} shows the derivation of the equations of moments and 
the screening of the moments by symmetries of kernel function. 
Sec.\ref{Ord} is dedicated to the analysis of the impacts of the coefficients 
in polynomial approximation of $G$. 
In Sec.\ref{Con}, we make a conclusion and propose some prospective problems. 

\section{Modelling of rigid molecules of arbitrary shape}\label{Model}
This section presents the density functional theory of rigid molecules. 
We start from a general formulation, then deduce the free energy for spatially 
homogeneous phases. 
A three-dimensional rigid molecule might be chiral, leading to two possible 
configurations that cannot coincide through proper rotation. 
In this work we simply deal with systems with single chirality. 
Systems with mixed chirality can be treated by regarding two kinds of chirality 
as different molecules. 
\subsection{Representation of the configuration of rigid molecules}
We choose a reference point $\hat{O}$ on the rigid molecule and a body-fixed orthogonal 
basis $\bm{m}_1,\bm{m}_2,\bm{m}_3$. The configuration of the molecule is 
determined by the position of $\hat{O}$ and the orientation of $\bm{m}_i$. In a 
space-fixed orthogonal coordinate system $(O;\bm{e}_1,\bm{e}_2,\bm{e}_3)$, they 
can be expressed in terms of $\bm{x}_0=\overrightarrow{O\hat{O}}$ 
and a three-dimensional proper rotation $P\in SO_3$. 
In the language of matrix, $P$ is orthogonal such that
\begin{equation}\label{RotP}
\left(
  \bm{m}_1,
  \bm{m}_2,
  \bm{m}_3
\right)
=
\left(
  \bm{e}_1,
  \bm{e}_2,
  \bm{e}_3
\right)P.
\end{equation}
The elements of $P^T=(m_{ij})$ is given by
$$
m_{ij}=\bm{m}_i\cdot\bm{e}_j.
$$
The position of a fixed point on the molecule is represented by its coordinates 
in the body-fixed coordinate system $(O;\bm{m}_1,\bm{m}_2,\bm{m}_3)$: 
$$
\bm{\hat{x}}=\left(
\begin{array}{c}
\hat{x}_1\\
\hat{x}_2\\
\hat{x}_3
\end{array}
\right).
$$
Its location in space, expressed by its coordinates in $(O;\bm{e}_1,\bm{e}_2,\bm{e}_3)$, is
$$
\bm{x}=\left(
\begin{array}{c}
x_1\\
x_2\\
x_3
\end{array}
\right)
=P\bm{\hat{x}}+\bm{x}_0.
$$

Every $P\in SO_3$ has a representation by Euler angles $\alpha, \beta, \gamma$: 
\begin{eqnarray}
&&P(\alpha,\beta,\gamma)\nonumber\\
&=&\left(
\begin{array}{ccc}
 \cos\alpha &\quad -\sin\alpha\cos\gamma &\quad\sin\alpha\sin\gamma\\
 \sin\alpha\cos\beta &\quad\cos\alpha\cos\beta\cos\gamma-\sin\beta\sin\gamma &
 \quad -\cos\alpha\cos\beta\sin\gamma-\sin\beta\cos\gamma\\
 \sin\alpha\sin\beta &\quad\cos\alpha\sin\beta\cos\gamma+\cos\beta\sin\gamma &
 \quad -\cos\alpha\sin\beta\sin\gamma+\cos\beta\cos\gamma
\end{array}
\right),\label{EulerRep}
\end{eqnarray}
with
$$
\alpha\in [0,\pi],\ \beta,\gamma\in [0,2\pi). 
$$
The uniform probability measure on $SO_3$ is given by
$$
\d\nu=\frac{1}{8\pi^2}\sin\alpha\d\alpha\d\beta\d\gamma. 
$$

\subsection{Density functional theory}
We start from the extension of virial expansion that includes inhomogeneity 
both spatial and orientational. 
\begin{equation}\label{FreeEng1}
\begin{split}
F[f]&=F_0+\frac{k_BT}{V}
\left[\int \d\nu\d\bm{x} f(\bm{x},P)\log f(\bm{x},P)\right. \\
&\left.+\frac{1}{2}\int\d\nu(P)\d\bm{x}\d\nu(P')\d\bm{x'}
f(\bm{x},P)G(\bm{x},P,\bm{x'},P')f(\bm{x'},P')\right]. 
\end{split}
\end{equation}
The probability density function $f$ agrees with the concentration $c$: 
$$
\frac{1}{V}\int\d\bm{x}\int\d\nu f(\bm{x},P)=c. 
$$
Virial expansion is appropriate for small concentration. 
Corrections for large concentration have also been discussed, such as 
in \cite{Largec}. 
The kernel function in (\ref{FreeEng1}) is given by Mayer function\cite{Mayer}
\begin{equation}\label{VirExp}
  G(\bm{x},P,\bm{x'},P')=1-\exp\left(-U(\bm{x},P,\bm{x'},P')/k_BT\right)
\end{equation}
where $U$ is pairwise interaction. 

Many types of interaction could appear in $U$. But here we model the molecule 
as a combination of spheres with the same diameter $D$ and assume that $U$ 
consists of the sum of interaction of every pairs of spheres. 
Suppose that the distribution of their centers is given by $\rho(\bm{\hat{x}})$ 
in the body-fixed coordinate system, then $U$ can be written as 
\begin{equation}\label{Interaction}
  U(\bm{x},P,\bm{x'},P')=\int\d\bm{\hat{x}}\d\bm{\hat{x}'}
  V_0\Big(|(P\bm{\hat{x}}+\bm{x})-(P'\bm{\hat{x}'}+\bm{x'})|\Big)
  \rho(\bm{\hat{x}})\rho(\bm{\hat{x}'})
\end{equation}
where $V_0(r)$ is the potential of a single pair of spheres. 
It can take hardcore potential
\begin{equation}\label{hardcore}
  V_0(r)=\left\{
  \begin{array}{cc}
    \infty,&r\le D\\
    0,&r>D
  \end{array}
  \right.
  ,
\end{equation}
or Lennard-Jones potential
\begin{equation}\label{LJ}
  V_0(r)=4\epsilon\left[
  \left(\frac{\sigma}{r}\right)^{12}-\left(\frac{\sigma}{r}\right)^6\right]. 
\end{equation}
In (\ref{LJ}) $\sigma$ is a function of $D$. Some other types of interaction between spheres 
can also be incoporated in $V_0$, such as electrostatic potential for charged 
molecules. 
Independent of $V_0$, kernel function $G$ has the following properties: 
\begin{propos}\label{Inv0}
  \begin{enumerate}
  \item 
  $G(\bm{x},P,\bm{x'},P')$ remains unchanged when switching $(\bm{x},P)$ and 
  $(\bm{x'}, P')$: 
  \begin{equation}\label{InvSwap}
  G(\bm{x},P,\bm{x'},P')=G(\bm{x'},P',\bm{x},P). 
  \end{equation}
  \item 
  $G(\bm{x},P,\bm{x'},P')$ depends only on $\bm{x-x'}$ when $P$ and $P'$ 
  are fixed: 
  \begin{equation}\label{Inv1}
  G(\bm{x},P,\bm{x'},P')=G(\bm{x-x'},P,P').
  \end{equation}
  \item 
  $G(\bm{x},P,\bm{x'},P')$ is invariant when two molecules rotate together: 
  \begin{equation}\label{Inv2}
  G\big(T(\bm{x-x'}),TP,TP'\big)=G(\bm{x-x'},P,P'),\ \forall T\in SO_3. 
  \end{equation}
  \end{enumerate}
\end{propos}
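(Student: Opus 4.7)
The plan is to reduce each of the three symmetries of $G$ to the corresponding symmetry of the pairwise interaction $U$, since $G = 1 - \exp(-U/k_BT)$ is a strictly monotone function of $U$. Thus it suffices to verify that $U$ satisfies the analogous invariances as a functional of $(\bm{x},P,\bm{x'},P')$ through the representation (\ref{Interaction}).

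First, for the swap property (\ref{InvSwap}), I would relabel the dummy variables $\hat{\bm{x}}\leftrightarrow\hat{\bm{x}'}$ in the double integral defining $U$. The integrand becomes $V_0(|(P'\hat{\bm{x}'}+\bm{x'})-(P\hat{\bm{x}}+\bm{x})|)\rho(\hat{\bm{x}'})\rho(\hat{\bm{x}})$, which equals the original integrand since $V_0$ depends only on the magnitude $|\cdot|$ and the arguments of the norm are just negated. This gives $U(\bm{x},P,\bm{x'},P')=U(\bm{x'},P',\bm{x},P)$ and therefore (\ref{InvSwap}).

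Second, for translation invariance (\ref{Inv1}), I would simply rewrite the argument of $V_0$ as
\begin{equation*}
(P\hat{\bm{x}}+\bm{x})-(P'\hat{\bm{x}'}+\bm{x'}) = P\hat{\bm{x}}-P'\hat{\bm{x}'}+(\bm{x}-\bm{x'}),
\end{equation*}
which depends on $\bm{x},\bm{x'}$ only through the difference $\bm{x}-\bm{x'}$. Since the only place $\bm{x},\bm{x'}$ enter $U$ is inside this norm, (\ref{Inv1}) follows immediately.

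Third, for joint-rotation invariance (\ref{Inv2}), I would replace $(P,P',\bm{x}-\bm{x'})$ by $(TP,TP',T(\bm{x}-\bm{x'}))$ inside $U$ and use linearity of $T$ to pull it out of each term: the argument of $V_0$ becomes $T\bigl(P\hat{\bm{x}}-P'\hat{\bm{x}'}+(\bm{x}-\bm{x'})\bigr)$. Because $T\in SO_3$ is orthogonal, $|Tv|=|v|$, so $V_0$ is unchanged and (\ref{Inv2}) follows. The density weights $\rho(\hat{\bm{x}})\rho(\hat{\bm{x}'})$ play no role in the transformation since the integration variables $\hat{\bm{x}},\hat{\bm{x}'}$ are untouched. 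There is no real obstacle here: the only mild subtlety is keeping track of which arguments are rotated versus which are integrated out, so I would be careful to distinguish the body-fixed variables (unaffected by $T$) from the space-fixed pair $(\bm{x}-\bm{x'},P,P')$ (which transform jointly).
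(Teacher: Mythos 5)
Your proposal is correct and follows essentially the same route as the paper: reduce all three invariances of $G$ to the corresponding invariances of $U$ via the Mayer relation (\ref{VirExp}), then verify them on the integral representation (\ref{Interaction}) using relabeling of dummy variables, dependence of the norm's argument only on $\bm{x}-\bm{x'}$, and orthogonality of $T$. The paper merely declares the first two invariances of $U$ obvious and writes out only the rotation case, so your version is the same argument with the easy steps made explicit.
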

\begin{proof}
  From (\ref{VirExp}) it is sufficient to show 
  \begin{eqnarray*}
  &&U(\bm{x},P,\bm{x'},P')=U(\bm{x'},P',\bm{x},P)=U(\bm{x-x'},P,P'),\\
  &&U\big(T(\bm{x-x'}),TP,TP'\big)=U(\bm{x-x'},P,P').
  \end{eqnarray*}
  The former is obvious from the definition (\ref{Interaction}) of $U$. For the 
  latter, 
  \begin{eqnarray*}
  U\big(T(\bm{x-x'}),TP,TP'\big)&=&\int\d\bm{\hat{x}}\d\bm{\hat{x}'}
  V_0\Big(\left|(TP\bm{\hat{x}}+T\bm{x})-(TP'\bm{\hat{x}'}+T\bm{x'})\right|\Big)
  \rho(\bm{\hat{x}})\rho(\bm{\hat{x}'})\\
  &=&\int\d\bm{\hat{x}}\d\bm{\hat{x}'}
  V_0\Big(|T|\cdot\left|(P\bm{\hat{x}}+\bm{x})-(P'\bm{\hat{x}'}+\bm{x'})\right|\Big)
  \rho(\bm{\hat{x}})\rho(\bm{\hat{x}'})\\
  &=&U(\bm{x-x'},P,P').
  \end{eqnarray*}
\end{proof}

Now we deduce the free energy of spatially homogeneous phases, namely
$$
f(\bm{x},P)=c\tilde{f}(P)
$$
where $\tilde{f}(P)$ is a density function on $SO_3$. Define homogeneous kernel 
function as
\begin{equation}\label{HomG}
\tilde{G}(P,P')=\int\d\bm{x'} G(\bm{x-x'},P,P'). 
\end{equation}
It is well-defined because the integration on the right side is 
invariant with $\bm{x}$: 
\begin{eqnarray*}
  \int\d\bm{x'} G(\bm{x-x'},P,P')&=&\int\d(\bm{x'-x}) G(\bm{x-x'},P,P')\\
  &=&\int\d\bm{x'} G(\bm{0-x'},P,P'),\qquad \forall \bm{x}\in \mathbb{R}^3. 
\end{eqnarray*}
Applying Proposition \ref{Inv0} to $\tilde{G}$, we have
\begin{propos}\label{RelP}
  $\tilde{G}(P,P')$ satisfies
  $$
  \tilde{G}(P,P')=\tilde{G}(P',P)
  $$
  and
  $$
  \tilde{G}(P,P')=\tilde{G}(TP,TP'). 
  $$
  By setting $T=P^{-1}$, we know that $\tilde{G}(P,P')$ is a function of 
  $\bar{P}=P^{-1}P'$, which is denoted by $\tilde{G}(\bar{P})$. 
\end{propos}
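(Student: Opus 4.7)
The plan is to derive both identities directly from Proposition \ref{Inv0}, using the fact that the integral defining $\tilde{G}$ in (\ref{HomG}) is invariant under translations and (orientation-preserving) rotations of the integration variable. First, I would observe that by substituting $\bm{y}=\bm{x}-\bm{x'}$ in (\ref{HomG}) one gets $\tilde{G}(P,P')=\int\d\bm{y}\,G(\bm{y},P,P')$, a form that eliminates $\bm{x}$ cleanly and is the natural object to manipulate.

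Next I would establish the swap symmetry. Combining (\ref{InvSwap}) with (\ref{Inv1}) gives $G(\bm{y},P,P')=G(\bm{y},\bm{0},P,P')=G(\bm{0},P,\bm{y},P')$ reversed, namely $G(\bm{y},P,P')=G(-\bm{y},P',P)$. Substituting this in $\int\d\bm{y}\,G(\bm{y},P,P')$ and then changing variables $\bm{y}\mapsto-\bm{y}$ (which leaves Lebesgue measure on $\mathbb{R}^3$ invariant) yields $\tilde{G}(P,P')=\tilde{G}(P',P)$.

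For the rotational invariance, I would apply (\ref{Inv2}) in the form $G(\bm{y},TP,TP')=G(T^{-1}\bm{y},P,P')$ (obtained by setting $\bm{x}-\bm{x'}=T^{-1}\bm{y}$ in the identity). Then in the integral $\tilde{G}(TP,TP')=\int\d\bm{y}\,G(T^{-1}\bm{y},P,P')$ I would change variables $\bm{z}=T^{-1}\bm{y}$; since $T\in SO_3$ is a rotation, its Jacobian is $1$, so the integral reduces to $\int\d\bm{z}\,G(\bm{z},P,P')=\tilde{G}(P,P')$. Specializing to $T=P^{-1}$ then gives $\tilde{G}(P,P')=\tilde{G}(I,P^{-1}P')$, which depends on $P,P'$ only through the product $\bar{P}=P^{-1}P'$, justifying the notation $\tilde{G}(\bar{P})$.

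There is no real obstacle here: the proof is a direct two-line application of Proposition \ref{Inv0} combined with the translation and rotation invariance of Lebesgue measure on $\mathbb{R}^3$. The only thing to keep in mind is that the integral over $\bm{x'}$ (or equivalently $\bm{y}$) must be well-defined, which was already noted above the statement using translation invariance, and that the Jacobian $|\det T|=1$ is essential for the second identity.
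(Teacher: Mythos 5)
Your proof is correct and follows essentially the same route as the paper: both identities are obtained from Proposition \ref{Inv0} together with the translation, reflection and rotation invariance of Lebesgue measure on $\mathbb{R}^3$ (Jacobian $1$ for $T\in SO_3$), followed by the specialization $T=P^{-1}$. The only difference is cosmetic — you work with the variable $\bm{y}=\bm{x}-\bm{x'}$ while the paper keeps $\bm{x'}$ — so no further comment is needed.
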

\begin{proof}
  Using (\ref{InvSwap}) and (\ref{Inv2}), we get
  \begin{eqnarray*}
    \tilde{G}(P',P)&=&\int\d\bm{x'}G(\bm{x-x'},P',P)\\
    &=&\int\d\bm{x'}G(\bm{x'-x},P,P')=\int\d\bm{x'}G(\bm{-x-x'},P,P')\\
    &=&\tilde{G}(P,P'), \\\\
    \tilde{G}(TP,TP')&=&\int\d\bm{x'}G\big(T(\bm{x-x'}),TP,TP'\big)\\
    &=&\int\d\bm{x'}G(\bm{x-x'},P,P')=\tilde{G}(P,P'). 
  \end{eqnarray*}
\end{proof}
The rest of paper will focus on spatially homogeneous phases. 
For convenience we use $f(P)$ and $G(P,P')$ instead of 
$\tilde{f}(P)$ and $\tilde{G}(P,P')$. 
Then the free energy (\ref{FreeEng1}) becomes
\begin{eqnarray}
  \frac{F[f]}{c}&=&\frac{F_0}{c}+k_BT\log c+k_BT
  \left[\int \d\nu f(P)\log f(P)\right. \nonumber\\
  &&\left.+\frac{c}{2}\int\d\nu(P)\d\nu(P') 
    f(P)G(P,P')f(P')\right]\label{FreeEngN}
\end{eqnarray}
with the normalization condition of $f(P)$
\begin{equation}\label{Consrv}
\int\d\nu f(P)=1. 
\end{equation}

For rod-like and bent-core molecules, the sphere centers lie on a curve. 
They can be viewed as either discretely or continuously distributed on the 
curve, which means 
$$
\rho(\bm{\hat{x}})=\sum_{j=0}^N\delta(\hat{\bm{x}}-\bm{\tilde{r}}_j), 
$$
or 
$$
\rho(\bm{\hat{x}})=\int\d s \delta\big(\bm{\hat{x}}-\bm{\tilde{r}}(s)\big). 
$$
In the discrete version, a rod-like molecule is modelled by
\begin{equation}\label{DisR}
  \bm{\tilde{r}}_j=Ls_j\bm{m}_1,\ s_j=\frac{j}{N}-\frac{1}{2},\ L=(N-1)r_0, 
\end{equation}
and a bent-core molecule is modelled by 
\begin{equation}\label{DisB}
  \bm{\tilde{r}}_j=L(\frac{1}{2}-|s_j|)\cos\frac{\theta}{2}\bm{m}_1  +Ls_j
  \sin\frac{\theta}{2}\bm{m}_2,\ s_j=\frac{j}{N}-\frac{1}{2},\ L=(N-1)r_0,  
\end{equation}
where $N$ is even. In the continum version, a rod-like molecule is modelled by
\begin{equation}\label{ContR}
  \bm{\tilde{r}}(s)=Ls\bm{m}_1,\ s\in[-\frac{1}{2},\frac{1}{2}],
\end{equation}
and a bent-core molecule is modelled by
\begin{equation}\label{ContB}
  \bm{\tilde{r}}(s)=L(\frac{1}{2}-|s|)\cos\frac{\theta}{2}\bm{m}_1
  +Ls\sin\frac{\theta}{2}\bm{m}_2,\ s\in[-\frac{1}{2},\frac{1}{2}].
\end{equation}
\begin{figure}
\centering
\includegraphics[width=0.87\textwidth,keepaspectratio]{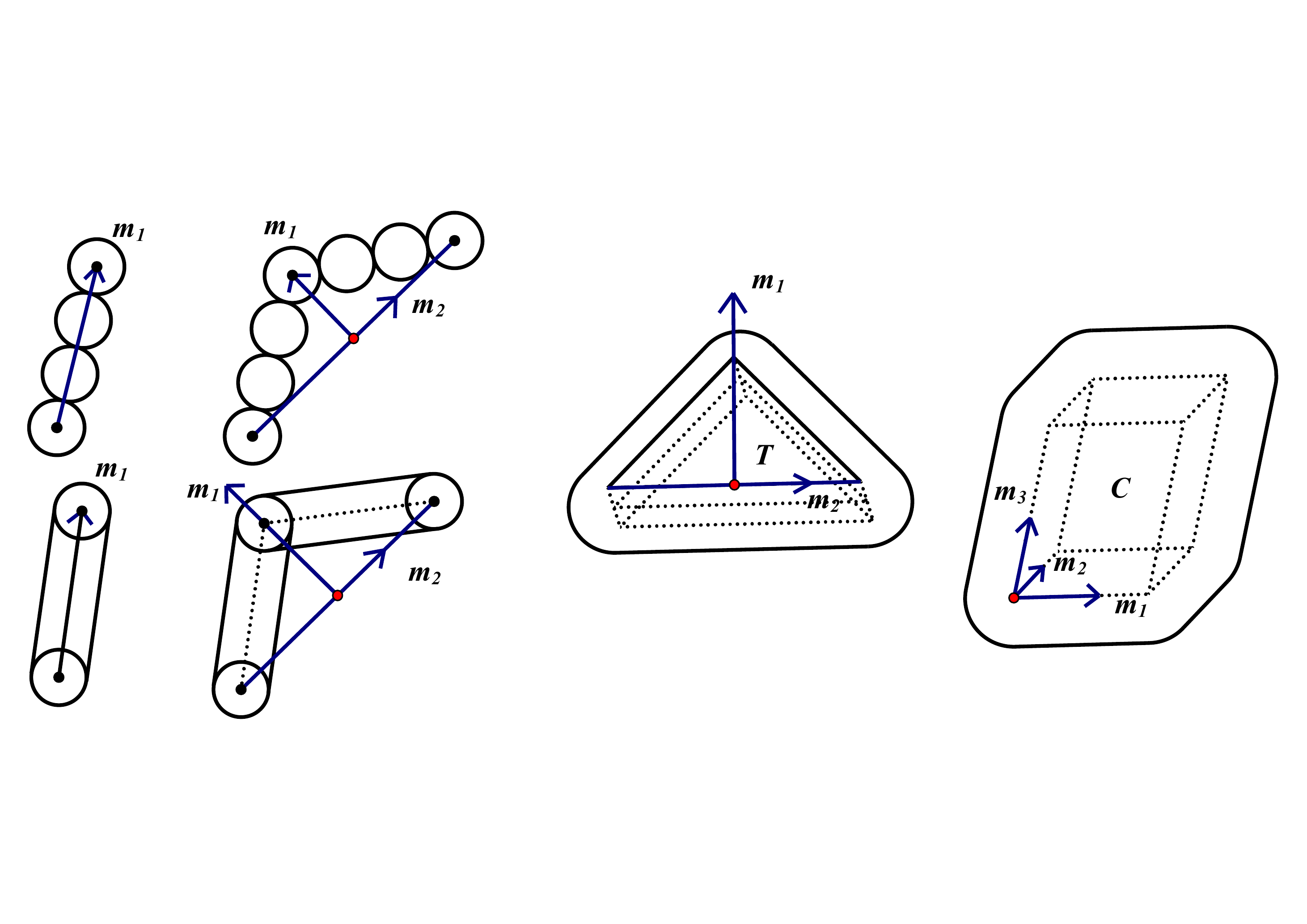}
\caption{Different rigid molecules. From left to right: rod, 
bent-core molecule, spherotriangle, spherocuboid.}\label{molecules}
\end{figure}
Both the discrete and continuous versions have the same symmetry: rod-like 
molecules have $D_{\infty h}$ symmetry, and bent-core molecules have $C_{2v}$ 
symmetry. Another example of $C_{2v}$ symmetry is isosceles spherotriangles 
(by sphero-A, we refer to the {\it Minkowski sum} of A and a sphere): 
\begin{equation}
  \rho(\bm{\hat{x}})=\int_T \d S \delta(\bm{\hat{x}}-\bm{r})
\end{equation}
where $T$ is an isosceles triangle that lies in plane $O\bm{m}_1\bm{m}_2$. 
Spherocuboids, which possess $D_{2h}$ symmetry, is considered in \cite{Bi1}. 
The distribution of sphere centers is given by 
\begin{equation}
  \rho(\bm{\hat{x}})=\int_C \d V \delta(\bm{\hat{x}}-\bm{r})=\chi_{\{\bm{\hat{x}}\in C\}}
\end{equation}
where $C$ is a cuboid with edges parallel to $\bm{m}_i$. 
The molecules mentioned above are drawn in Fig.\ref{molecules}. 

Next we try to compute $G(P,P')$ when using hardcore potential 
(\ref{hardcore}). Notice that $G(\bm{x},P,\bm{x'},P')$ equals to $1$ if two 
molecules overlap and $0$ elsewhere. Thus by (\ref{HomG}) $G(P,P')$ is actually 
excluded-volume potential, and the problem has converted into finding this 
volume. Onsager potential (\ref{Onsager}) is an approximation of the excluded 
volume for rod-like molecules. Similar approach has been done for 
spherozonotopes by B. M. Mulder\cite{Mulder} based on {\it Steiner formula}
\cite{Convex}. Here we present the calculation of the excluded volume of 
spherotriangles.

Denote the set of sphere centers of two molecules as $T_1$ and $T_2$. 
The excluded region of two molecules is $K+B_D$, where $K=T_1-T_2$ and 
$B_D$ is a sphere of diameter D. Here $A-B$ represents
$$
A-B=\{a-b|a\in A, b\in B\}. 
$$
When $K$ is convex, 
the measure of $K+B_D$ is expressed by Steiner formula in a polynomial of $D$.
Here we write down the three-dimensional case 
\begin{equation}\label{Steiner}
  V(K+B_D)=V_3(K) + DV_2(K) + \pi D^2V_1(K) + \frac{4}{3}\pi D^3, 
\end{equation}
where $V_3$ is the volume and $V_2$ is the surface area of $K$(see p.210 of 
\cite{Convex}). $V_1$ is the {\it mean width} of $K$ (for the definition, 
see p.42 of \cite{Convex}). For a polytope, $V_1$ is written as 
\begin{equation}\label{ExtRep}
  V_1=\sum_{e}\gamma(e,K)l(e).
\end{equation}
The sum is taken over all the edges of $K$. $l(e)$ is the length of edge, and 
$\gamma(e,K)$ represents the external angle at $e$. It is defined as 
$$
\gamma(e,K)=\frac{\theta}{2\pi}, 
$$
where $\theta$ is the angle between outward normal vectors of two faces that 
share edge $e$. 
For a polytope $K$, each of the four terms in (\ref{Steiner}) represents a part 
of $K+B_D$: $K$; parallelepipeds growing outward at each face; circular sector 
cylinders at each edge; corner regions at each vertex. 

We list $V_i$ for the excluded volume of two spherotriangles 
$T_1=\triangle OAB$ and $-T_2=\triangle O'A'B'$. 
Details are left to Appendix. Denote the edges of triangles as 
$$
\overrightarrow{AO}=\bm{a},\ \overrightarrow{OB}=\bm{b},\ \overrightarrow{BA}=\bm{c},\ 
\overrightarrow{A'O'}=\bm{a'},\ \overrightarrow{O'B'}=\bm{b'},\ \overrightarrow{B'A'}=\bm{c'}.
$$
We have
\begin{align*}
  &V_3(K)=
  \frac{1}{2}\Big(|\bm{a}\times\bm{b}\cdot\bm{a'}|+|\bm{a}\times\bm{b}\cdot\bm{b'}|
    +|\bm{a}\times\bm{b}\cdot\bm{c'}|+|\bm{a'}\times\bm{b'}\cdot\bm{a}|
    +|\bm{a'}\times\bm{b'}\cdot\bm{b}|+|\bm{a'}\times\bm{b'}\cdot\bm{c}|\Big).\\
  &V_1(K)=\frac{1}{2}\Big(|\bm{a}|+|\bm{b}|+|\bm{c}|+|\bm{a'}|+|\bm{b'}|+|\bm{c'}|\Big).
\end{align*}
The expression of $V_2$ depends on the relative orientation of two molecules. 
Assume that 
\begin{eqnarray*}
  (\bm{a}\times\bm{b}\cdot\bm{a'})(\bm{a}\times\bm{b}\cdot\bm{b'})\ge 0, \\
  (\bm{a'}\times\bm{b'}\cdot\bm{a})(\bm{a'}\times\bm{b'}\cdot\bm{b})\ge 0. 
\end{eqnarray*}
Otherwise we could rotate the notations of the edges. 
If $(\bm{c}\times\bm{c'}\cdot\bm{a})(\bm{c}\times\bm{c'}\cdot\bm{a'})>0$, then 
$$
V_2(K)=|\bm{a}\times \bm{b}|+|\bm{a'}\times \bm{b'}|
+|\bm{a}\times \bm{a'}|+|\bm{a}\times \bm{b'}|+|\bm{b}\times \bm{a'}|
+|\bm{b}\times \bm{b'}|+|\bm{c}\times \bm{c'}|.
$$
If $(\bm{c}\times\bm{c'}\cdot\bm{a})(\bm{c}\times\bm{c'}\cdot\bm{a'})<0$, then 
$$
V_2(K)=|\bm{a}\times \bm{b}|+|\bm{a'}\times \bm{b'}|
+|\bm{c}\times \bm{a'}|+|\bm{c}\times \bm{b'}|
+|\bm{a}\times \bm{c'}|+|\bm{b}\times \bm{c'}|.
$$

When $K$ is not convex, $V(K+B_D)$ does not have a general formula. In Appendix 
we give an expression for the excluded volume of bent-core molecules. 

Excluded-volume potential fails to contain temperature $T$ in 
kernel function, which makes it insufficient to study thermotropic LC. 
In the next two sections we will present a systematic procedure of constructing 
an approximation of kernel function that does not eliminate temperature. 
Symmetries play an important role throughout the procedure.

\section{Symmetries of kernel function and reduction of configuration space}
\label{Sy}
In this section we study the symmetric properties of $G$ and $f$ inherited from 
molecular symmetries. That a rigid molecule is symmetric under $T\in SO_3$ means
\begin{equation}\label{Symm0} 
\rho(T\bm{\hat{x}})=\rho(\bm{\hat{x}}). 
\end{equation}
Denote by $\H$ a subgroup of $SO_3$ that leaves the molecule invariant. 
We start from two fundamental theorems. 
\begin{thm}
\label{TInv}
If $T\in\H$, then
\begin{equation}\label{SymmG}
G(PT,P')=G(P,P'T)=G(P,P').
\end{equation}
\end{thm}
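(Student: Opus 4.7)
The plan is to push the molecular symmetry $\rho(T\hat{\bm{x}})=\rho(\hat{\bm{x}})$ down through each layer of the construction of $G$: first into the pairwise interaction $U$ via a change of variables on the body-fixed coordinates, then into the Mayer function, and finally through the spatial integration that defined the homogeneous kernel. The key observation is that right-multiplying $P$ by $T\in\H\subset SO_3$ is the same as pre-rotating the body-fixed coordinate $\hat{\bm{x}}$, since $PT\hat{\bm{x}}=P(T\hat{\bm{x}})$, and the body-fixed coordinate is exactly the variable on which $\rho$ is defined and for which the symmetry hypothesis (\ref{Symm0}) holds.

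Concretely, I would start from (\ref{Interaction}) and perform the substitution $\hat{\bm{x}}\mapsto T\hat{\bm{x}}$ in the inner integral. Because $T$ is a rotation, the Lebesgue measure $\d\hat{\bm{x}}$ is preserved; and by (\ref{Symm0}) the factor $\rho(\hat{\bm{x}})$ is preserved as well. The only net effect is the replacement $P\hat{\bm{x}}\mapsto PT\hat{\bm{x}}$ inside $V_0$, so that
\[
U(\bm{x},P,\bm{x}',P')=U(\bm{x},PT,\bm{x}',P').
\]
Applying the analogous substitution $\hat{\bm{x}}'\mapsto T\hat{\bm{x}}'$ yields the mirror identity $U(\bm{x},P,\bm{x}',P')=U(\bm{x},P,\bm{x}',P'T)$. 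Since the Mayer function (\ref{VirExp}) depends only on $U$, the two identities transfer at once to $G(\bm{x},P,\bm{x}',P')$.

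Finally, I would substitute these equalities into the definition (\ref{HomG}) of the homogeneous kernel $\tilde{G}$, which (under the notational convention adopted after (\ref{Consrv})) is precisely the $G(P,P')$ in the statement. The integration variable $\bm{x}'$ is not affected by the change of variables used above, so the invariance is inherited directly, giving (\ref{SymmG}).

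I do not expect a genuine obstacle in this proof; it is essentially a change of variables that applies exactly the hypothesis (\ref{Symm0}). The one point requiring care is bookkeeping of left versus right multiplication by $T$: the action of $\H$ on the molecule is an action on the body-fixed frame, which corresponds to right-multiplication on $P$, so the symmetry must be pushed through at the correct slot to arrive at $G(PT,P')$ and $G(P,P'T)$ rather than $G(TP,P')$ or $G(P,TP')$.
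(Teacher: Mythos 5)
Your proposal is correct and follows essentially the same route as the paper's proof: a change of variables $\hat{\bm{x}}\mapsto T\hat{\bm{x}}$ in (\ref{Interaction}) using the rotation-invariance of the measure and the symmetry hypothesis (\ref{Symm0}) to get the invariance of $U$, then transferring this through the Mayer function (\ref{VirExp}) and the spatial integration (\ref{HomG}). Your closing remark about right-multiplication corresponding to pre-rotation of the body-fixed coordinate is exactly the bookkeeping the paper's computation implements.
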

\proof
From (\ref{Symm0}) and (\ref{Interaction}), $U$ is symmetric under $T$: 
\begin{eqnarray*}
  U(\bm{x},PT,\bm{x'},P')&=&\int\d\bm{\hat{x}}\d\bm{\hat{x}'}
  V_0\Big(\left|(PT\bm{\hat{x}}+\bm{x})-(P'\bm{\hat{x}'}+\bm{x'})\right|\Big)
  \rho(\bm{\hat{x}})\rho(\bm{\hat{x}'})\\
  &=&\int\d(T\bm{\hat{x}})\d\bm{\hat{x}'}
  V_0\Big(\left|(P(T\bm{\hat{x}})+\bm{x})-(P'\bm{\hat{x}'}+\bm{x'})\right|\Big)
  \rho(T\bm{\hat{x}})\rho(\bm{\hat{x}'})\\
  &=&\int\d\bm{\hat{x}}\d\bm{\hat{x}'}
  V_0\Big(\left|(P\bm{\hat{x}}+\bm{x})-(P'\bm{\hat{x}'}+\bm{x'})\right|\Big)
  \rho(\bm{\hat{x}})\rho(\bm{\hat{x}'})\\
  &=&U(\bm{x},P,\bm{x'},P'). 
\end{eqnarray*}
By (\ref{VirExp}), we have
\begin{eqnarray*}
G(\bm{x-x'},P,P')&=&1-\exp\big(-U(\bm{x},P,\bm{x'},P')\big)\\
&=&1-\exp\big(-U(\bm{x},PT,\bm{x'},P')\big)\\
&=&G(\bm{x-x'},PT,P').  
\end{eqnarray*}
Hence by (\ref{HomG}), we get
\begin{eqnarray*}
G(P,P')&=&\int\d\bm{x'} G(\bm{x-x'},P,P')\\
&=&\int\d\bm{x'} G(\bm{x-x'},PT,P')=G(PT,P'). 
\end{eqnarray*}
The other equality in (\ref{SymmG}) could be obtained similarly. 
\qed

\begin{thm}
\label{SymP}
  If a molecule has a symmetry plane with unit normal vector $\bm{\hat{k}}$, 
  then
  \begin{equation}\label{SymPlane}
    G(J\bar{P}J)=G(\bar{P})
  \end{equation}
  where $J$ is the rotation around $\bm{\hat{k}}$ by $\pi$. 
\end{thm}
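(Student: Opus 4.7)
The plan is to reduce the claim $G(J\bar P J)=G(\bar P)$ to an invariance of the two-argument kernel under a simultaneous substitution. Since $J^2=I$, the transformation $P\mapsto PJ$, $P'\mapsto P'J$ sends $\bar P=P^{-1}P'$ to $(PJ)^{-1}(P'J)=J\bar P J$, so by Proposition \ref{RelP} the theorem is equivalent to $\tilde G(PJ,P'J)=\tilde G(P,P')$. This is the form I intend to establish.

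Next I would identify how the mirror plane enters. The reflection $R$ through the plane with unit normal $\bm{\hat k}$ factors as $R=-J$, where $-I$ is the central inversion (both act as $-1$ on $\bm{\hat k}$ and as $+1$ on its orthogonal complement). The molecular symmetry $\rho(R\bm{\hat x})=\rho(\bm{\hat x})$ therefore reads $\rho(-J\bm{\hat x})=\rho(\bm{\hat x})$. Since $-J\notin SO_3$, Theorem \ref{TInv} does not apply directly; instead I plan to exploit the fact that $V_0$ is radial, so $V_0(|\bm{v}|)=V_0(|-\bm{v}|)$, which will let a global sign be absorbed at the last step.

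The core computation then amounts to substituting $\bm{\hat x}\mapsto -J\bm{\hat x}$ and $\bm{\hat x'}\mapsto -J\bm{\hat x'}$ in the defining integral (\ref{Interaction}) of $U(\bm{x},PJ,\bm{x'},P'J)$. Using $J^2=I$, the products $PJ\bm{\hat x}$ and $P'J\bm{\hat x'}$ turn into $-P\bm{\hat x}$ and $-P'\bm{\hat x'}$, while $\rho$ is preserved by the mirror hypothesis; the argument of $V_0$ becomes the norm of $-(P\bm{\hat x}-P'\bm{\hat x'})+(\bm{x}-\bm{x'})$, which equals that of $(P\bm{\hat x}-P'\bm{\hat x'})-(\bm{x}-\bm{x'})$. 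This yields $U(\bm{x}-\bm{x'},PJ,P'J)=U(-(\bm{x}-\bm{x'}),P,P')$; passing this through (\ref{VirExp}) and (\ref{HomG}), and then making the change of variables $\bm{z}\to -\bm{z}$ in the resulting $\d\bm{x'}$ integral, gives $\tilde G(PJ,P'J)=\tilde G(P,P')$, which is equivalent to the theorem.

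The main obstacle, I anticipate, is conceptual rather than computational: one must notice that a mirror plane supplies only the improper rotation $-J$ as a molecular symmetry, so $J$ itself need not belong to $\H$ and Theorem \ref{TInv} is insufficient on its own. Once the factorization $R=-J$ is combined with the sign-invariance of $V_0(|\cdot|)$, tracking the two minus signs arising from $PJ\cdot(-J)=-P$ and absorbing the reversal of $\bm{x}-\bm{x'}$ via the final $\bm{z}\to -\bm{z}$ change of variables is entirely routine.
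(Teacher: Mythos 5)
Your proposal is correct and follows essentially the same route as the paper: both identify the mirror symmetry as the improper rotation $-J$, deduce $\rho(-J\bm{\hat{x}})=\rho(\bm{\hat{x}})$, substitute $\bm{\hat{x}}\mapsto -J\bm{\hat{x}}$ in the integral defining $U$ to obtain $U(\bm{x},PJ,\bm{x'},P'J)=U(-\bm{x},P,-\bm{x'},P')$, and then absorb the sign reversal of $\bm{x}-\bm{x'}$ by a change of variables in the $\d\bm{x'}$ integral defining the homogeneous kernel. Your explicit remark that $J\notin\H$ in general (so Theorem \ref{TInv} cannot be invoked directly) is a point the paper leaves implicit, but the argument itself is the same.
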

\begin{proof}
Assume that the symmetry plane contains $\hat{O}$, otherwise we shift the body-fixed 
coordinate system to meet this requirement. Now we have
  \begin{equation}\label{SymPlane0}
  \rho\big(\bm{\hat{x}}-2(\bm{\hat{k}}\cdot\bm{\hat{x}})\bm{\hat{k}}\big)=\rho(\bm{\hat{x}}).
  \end{equation}
Note that 
\begin{eqnarray*}
J\bm{\hat{x}}&=&J\big(\bm{\hat{x}}-(\bm{\hat{k}}\cdot\bm{\hat{x}})\bm{\hat{k}}\big)
+(\bm{\hat{k}}\cdot\bm{\hat{x}})J\bm{\hat{k}}\\
&=&-\big(\bm{\hat{x}}-(\bm{\hat{k}}\cdot\bm{\hat{x}})\bm{\hat{k}}\big)
+(\bm{\hat{k}}\cdot\bm{\hat{x}})\bm{\hat{k}}\\
&=&-\bm{\hat{x}}+2(\bm{\hat{k}}\cdot\bm{\hat{x}})\bm{\hat{k}}. 
\end{eqnarray*}
Substituting it into (\ref{SymPlane0}), we get
$$
\rho(-J\bm{\hat{x}})=\rho(\bm{\hat{x}}). 
$$
Therefore
\begin{eqnarray*}
  &&U(\bm{x},PJ,\bm{x'},P'J)\\
  &=&\int\d\bm{\hat{x}}\d\bm{\hat{x}'}
  V_0\Big(\left|(PJ\bm{\hat{x}}+\bm{x})-(P'J\bm{\hat{x}'})+\bm{x'}\right|\Big)
  \rho(\bm{\hat{x}})\rho(\bm{\hat{x}'})\\
  &=&\int\d(-J\bm{\hat{x}})\d(-J\bm{\hat{x}'})
  V_0\Big(\left|\big(P(-J\bm{\hat{x}})-\bm{x}\big)-\big(P'(-J\bm{\hat{x}'}\big)-\bm{x'})\right|\Big)
  \rho(-J\bm{\hat{x}})\rho(-J\bm{\hat{x}'})\\
  &=&\int\d\bm{\hat{x}}\d\bm{\hat{x}'}
  V_0\Big(\left|(P\bm{\hat{x}}-\bm{x})-(P'\bm{\hat{x}'}-\bm{x'})\right|\Big)
  \rho(\bm{\hat{x}})\rho(\bm{\hat{x}'})\\
  &=&U(-\bm{x},P,-\bm{x'},P'). 
\end{eqnarray*}
Similar to Theorem \ref{TInv}, we have
$$
G(\bm{x-x'},P,P')=G(\bm{x'-x},PJ,P'J).
$$
Thus
\begin{eqnarray*}
G(P,P')&=&\int\d\bm{x'} G(\bm{x-x'},P,P')\\
&=&\int\d\bm{x'} G(\bm{x'-x},PJ,P'J)=G(PJ,P'J). 
\end{eqnarray*}
\end{proof}
The local minima of the free energy (\ref{FreeEngN}) satisfy the Euler-Lagrange 
equation
\begin{equation}\label{EL}
  \lambda+\log f(P) + c\int\d\nu(P')G(P,P')f(P')=0
\end{equation}
where $\lambda$ is a Lanrangian multiplier to ensure the normalization of $f$. 
Denote 
\begin{equation}\label{DefU}
  W(P)=c\int\d\nu(P')G(P,P')f(P'). 
\end{equation}
The solution of (\ref{EL}) has the form
\begin{equation}\label{Boltz}
  f(P)=C\exp\big(-W(P)\big). 
\end{equation} 
\begin{thm}\label{fInv}
  If $T\in \H$, the solutions of (\ref{EL}) satisfy
  \begin{equation}\label{fInv1}
    f(P)=f(PT). 
  \end{equation}
\end{thm}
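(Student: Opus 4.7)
The plan is to read off the conclusion directly from the Boltzmann form (\ref{Boltz}) together with the $T$-invariance of $G$ in its first argument, which is already established in Theorem~\ref{TInv}. The key observation is that $f(PT) = f(P)$ will follow as soon as we can show $W(PT) = W(P)$, because (\ref{Boltz}) expresses any solution of (\ref{EL}) as $f(P) = C \exp(-W(P))$ and the prefactor $C$ does not depend on $P$.

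First I would invoke (\ref{Boltz}): every solution of the Euler-Lagrange equation (\ref{EL}) is of the form $f(P) = C\exp(-W(P))$, where $W$ is defined in (\ref{DefU}). Next, substitute $PT$ in place of $P$ in (\ref{DefU}) and apply Theorem~\ref{TInv} under the integral sign:
\begin{equation*}
W(PT) = c\int \d\nu(P') G(PT, P') f(P') = c\int \d\nu(P') G(P, P') f(P') = W(P),
\end{equation*}
where the middle equality uses $G(PT, P') = G(P, P')$ from (\ref{SymmG}). Therefore
\begin{equation*}
f(PT) = C\exp(-W(PT)) = C\exp(-W(P)) = f(P),
\end{equation*}
which is (\ref{fInv1}).

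There is no real obstacle: the theorem is a clean corollary of Theorem~\ref{TInv} combined with the fact that solutions of (\ref{EL}) depend on $P$ only through the self-consistent mean field $W(P)$. The only thing one needs to be slightly careful about is to note that (\ref{Boltz}) applies to \emph{every} solution of (\ref{EL}) (not just to a symmetrized solution one constructs by hand), so the statement holds for any critical point of the free energy, not merely for a distinguished one. I would make this explicit in one sentence to emphasize that the invariance is a property of the molecular symmetry itself and is unrelated to issues of spontaneous symmetry breaking of the external rotational symmetry of $SO_3$.
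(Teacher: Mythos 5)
Your proof is correct and follows essentially the same route as the paper: the paper substitutes $PT$ directly into the Euler--Lagrange equation (\ref{EL}) and concludes $\lambda+\log f(PT)=\lambda+\log f(P)$ via $G(PT,P')=G(P,P')$, which is exactly your argument written for $\log f$ instead of for the exponentiated form (\ref{Boltz}). The only difference is cosmetic, and your remark that the conclusion holds for every critical point is a fair (if implicit in the paper) clarification.
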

\proof
Substitute $P$ with $PT$ in (\ref{EL}). By Theorem \ref{TInv}, $G(PT,P')
=G(P,P')$, thus
$$
  \lambda+\log f(PT)=-\int\d\nu(P')G(PT,P')f(P')
    =-\int\d\nu(P')G(P,P')f(P')=\lambda+\log f(P).
$$
\qed

With the symmetries of $G$ and $f$, the configuration space could be 
reduced. Theorem.\ref{TInv} and \ref{fInv} indicate that $G$ is 
a function of $\H \bar{P}\H$, and $f$ is a function of $P\H$. Note that cosets 
of subgroup $\H$ form a partition of $SO_3$. Thus it allows us to define 
$\Omega=\{P\H | P\in SO_3\}$ as the new configuration space, where $f$ and $G$ 
are well-defined in $\Omega$ and $\Omega\times\Omega$, respectively. 
If we denote the probability space on $SO_3$ as $(SO_3, \F, \nu)$, then the new 
probability space $(\Omega, \F_{\H},\nu_{\H})$ is defined as follows: 
\begin{eqnarray*}
  &&\F_{\H}=\{\mathscr{A}\H |\mathscr{A}\subset SO_3\}\bigcap\F, \\
  &&\nu_{\H}(\mathscr{A}\H)=\nu(\mathscr{A}\H). 
\end{eqnarray*}
It can be proved that for any $\F_{\H}$ measurable function $h$, 
$$
\int_{\Omega}h\d\nu_{\H}=\int_{SO_3}\tilde{h}\d\nu
$$
where $\tilde{h}$ is defined as $\tilde{h}(P)=h(P\H)$. Hence the free energy could be rewritten as 
\begin{eqnarray*}
  \frac{F}{c}&=&\frac{F_0}{c}+k_BT\log c 
  + k_BT\left[\int_{\Omega}f(P\H)\ln f(P\H)\d\nu_{\H}\right.\\
  &&\left.+\frac{c}{2}\int_{\Omega\times\Omega}
  f(P\H)G(\H\bar{P}\H)f(P'\H)\d\nu_{\H}(P\H)\d\nu_{\H}(P'\H)\right]
\end{eqnarray*}
with the normalization condition $\int_{\Omega}f(P\H)\d\nu_{\H}=1$. 

The above process reduces the configuration space of molecules with 
$C_{\infty}$ symmetry to $S^2$. 
\begin{thm}
  For molecules with $C_{\infty}$ symmetry, the configuration space is reduced 
to $S^2$ with the uniform probablity measure 
$$
\d\nu_{\H}=\frac{\sin\alpha}{4\pi}\d\alpha\d\beta. 
$$
\end{thm}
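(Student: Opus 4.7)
The plan is to identify $\H$ concretely using $C_\infty$ symmetry, establish a bijection $\Omega \cong S^2$ via the vector $\bm{m}_1$, check using the Euler angle parametrization (\ref{EulerRep}) that $\gamma$ labels precisely the cosets, and then push the measure $\d\nu$ down by integrating out $\gamma$.

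First I would set up the subgroup. Up to choosing the body-fixed basis, the $C_\infty$ axis can be taken to be $\bm{m}_1$ (this is consistent with how rods are modelled in (\ref{DisR}) and (\ref{ContR})). Then $\H=\{T(\gamma):\gamma\in[0,2\pi)\}$ is the group of rotations
\[
T(\gamma)=\begin{pmatrix}1&0&0\\0&\cos\gamma&-\sin\gamma\\0&\sin\gamma&\cos\gamma\end{pmatrix},
\]
i.e.\ the stabilizer of $\bm{e}_1$ in $SO_3$. Next I would observe that $P^{-1}P'\in\H$ iff $P^{-1}P'\bm{e}_1=\bm{e}_1$ iff $P\bm{e}_1=P'\bm{e}_1$; since (\ref{RotP}) gives $\bm{m}_1=P\bm{e}_1$, the assignment $P\H\mapsto\bm{m}_1$ is a well-defined bijection from $\Omega$ onto $S^2$.

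The second step is to read this off from the Euler angles. Computing $P(\alpha,\beta,\gamma)\bm{e}_1$ from (\ref{EulerRep}) yields $(\cos\alpha,\sin\alpha\cos\beta,\sin\alpha\sin\beta)^T$, independent of $\gamma$, which is precisely the standard spherical parametrization of $S^2$ with polar angle $\alpha\in[0,\pi]$ and azimuthal angle $\beta\in[0,2\pi)$. A direct multiplication (checking the second and third columns of $P(\alpha,\beta,0)T(\gamma)$ against (\ref{EulerRep})) shows $P(\alpha,\beta,\gamma)=P(\alpha,\beta,0)T(\gamma)$, so for fixed $(\alpha,\beta)$ the full range of $\gamma$ traces out exactly the coset $P(\alpha,\beta,0)\H$. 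In particular $(\alpha,\beta)$ are genuine coordinates on $\Omega$.

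Finally, for the measure I would apply the definition $\nu_\H(\mathscr{A}\H)=\nu(\mathscr{A}\H)$ to a set of the form $\{(\alpha,\beta)\in R\}\subset\Omega$, whose pre-image in $SO_3$ is $R\times[0,2\pi)$ in Euler coordinates. Integrating $\d\nu=\frac{1}{8\pi^2}\sin\alpha\,\d\alpha\d\beta\d\gamma$ over $\gamma\in[0,2\pi)$ contributes a factor $2\pi$, leaving $\frac{\sin\alpha}{4\pi}\d\alpha\d\beta$ on $\Omega$, as claimed. There is no real obstacle beyond bookkeeping: the only mildly delicate point is the factorization $P(\alpha,\beta,\gamma)=P(\alpha,\beta,0)T(\gamma)$, a routine identity tied to the specific Euler convention in (\ref{EulerRep}).
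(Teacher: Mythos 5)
Your proposal is correct and follows essentially the same route as the paper: identify $\H$ with the rotations $T(\gamma)$ about the body axis $\bm{m}_1$, note that the coset $P(\alpha,\beta,0)\H$ is exactly $\{P(\alpha,\beta,\gamma):\gamma\in[0,2\pi)\}$ via $P(\alpha,\beta,\gamma)=P(\alpha,\beta,0)T(\gamma)$, and integrate out $\gamma$ to get the factor $2\pi/(8\pi^2)=1/(4\pi)$. The only difference is that you make the identification $\Omega\cong S^2$ explicit through the map $P\H\mapsto\bm{m}_1$ (the first column of $P$), which the paper leaves implicit by simply choosing the representatives $P(\alpha,\beta,0)$; this is a harmless elaboration, not a different argument.
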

\proof
Because $\H$ consists of all the rotations around $\bm{m}_1$, 
$$
P(\alpha,\beta,\gamma)\H=\big\{P(\alpha,\beta,\theta)|\theta\in[0,2\pi)\big\}. 
$$
So we could select $P(\alpha,\beta,0)$ as the representative element of $P\H$, 
and $\Omega$ becomes
$$
\Omega=\big\{P(\alpha,\beta,0)|\alpha\in[0,\pi],\beta\in[0,2\pi)\big\}=S^2. 
$$
Hence any $\mathscr{A}\H\in\F$ equals to $\mathscr{A}$ if $\mathscr{A}$ 
consists of some $P(\cdot,\cdot,0)$. The measure on the reduced configuration 
space is given by 
\begin{eqnarray*}
\nu_{\H}(\mathscr{A}\H)&=&\nu(\mathscr{A}\H)\\
&=&\int_{\mathscr{A}\H} \frac{\sin\alpha}{8\pi^2}\d\alpha\d\beta\d\gamma
\\
&=&\int_{\mathscr{A}}\frac{\sin\alpha}{4\pi}\d\alpha\d\beta
\int_0^{2\pi}\frac{1}{2\pi}\d\gamma\\
&=&\int_{\mathscr{A}}\frac{\sin\alpha}{4\pi}\d\alpha\d\beta. 
\end{eqnarray*}
Thus $\d\nu_{\H}=\frac{\sin\alpha}{4\pi}\d\alpha\d\beta$ is the 
uniform measure on $S^2$. 
\qed

\section{Polynomial approximation of kernel function}
\label{Trunc}
In this section we describe the construction of approximate kernel function. 
We aim to use a polynomial of nine elements of $\bar{P}$
$$
\bar{P}=(\bm{m}_i\cdot\bm{m'}_j)_{3\times 3}\triangleq(p_{ij})_{3\times 3}
$$
as the approximation. 
This form of approximation reduces the Euler-Lagrange equation (\ref{EL}) to 
a few self-consistent equations about moments. Suppose that $G$ has a term
$$
C(\bm{m}_{\sigma_1}\cdot\bm{m}_{\sigma'_1})\ldots
(\bm{m}_{\sigma_n}\cdot\bm{m}_{\sigma'_n})
=C(\bm{m}_{\sigma_1}\ldots\bm{m}_{\sigma_n})
:(\bm{m}_{\sigma'_1}\ldots\bm{m}_{\sigma'_n}). 
$$
It corresponds to a term
\begin{eqnarray*}
  &&C\int\d\nu(P)\d\nu(P')f(P)(\bm{m}_{\sigma_1}\ldots\bm{m}_{\sigma_n})
  :(\bm{m}_{\sigma'_1}\ldots\bm{m}_{\sigma'_n})f(P')\\
  &=&C\left(\int\d\nu(P)f(P)\bm{m}_{\sigma_1}\ldots\bm{m}_{\sigma_n}\right):
  \left(\int\d\nu(P')(\bm{m}_{\sigma'_1}\ldots\bm{m}_{\sigma'_n})f(P')\right)\\
  &=&C\left<\bm{m}_{\sigma_1}\ldots\bm{m}_{\sigma_n}\right>:
  \left<\bm{m}_{\sigma'_1}\ldots\bm{m}_{\sigma'_n}\right>
\end{eqnarray*}
in the free energy. And $W(P)$ must be of the form
$$
W(P)=\sum C\left<\bm{m}_{\sigma_1}\ldots\bm{m}_{\sigma_n}\right>
:\bm{m'}_{\sigma'_1}\ldots\bm{m'}_{\sigma'_n}.
$$
Denote by $\M$ the set of moments that appear in the free energy. 
This formula indicates that $W(P)$ is determined by the value of moments in 
$\M$. On the other hand, the moments can be calculated with (\ref{Boltz}) by 
\begin{equation}\label{SelfC}
  \left<\bm{m}_{\sigma_1}\ldots\bm{m}_{\sigma_n}\right>=
  C\int\d\nu(P')\bm{m'}_{\sigma_1}\ldots\bm{m'}_{\sigma_n}\exp\big(-W(P')\big). 
\end{equation}
Notice that the right side is a function of the moments. 
Applying this formula to all the moments in $\M$, we obtain a group of
self-consistent equations about these moments. So we only need to solve 
the moments in $\M$ instead of $f$. 

Next we will deduce the form of polynomial approximation of kernel function 
from its symmetries. 
Maier-Saupe potential will be derived naturally from the analysis. 
The nine elements of $\bar{P}$ are not independent. The third column is 
uniquely determined by the other two columns: 
$$
p_{i3}=p_{i+1,1}p_{i+2,2}-p_{i+1,2}p_{i+2,1},\ i=1,2,3. 
$$
Therefore $G$ can be expressed by a function of six variables
$$
p_{ij},\quad i=1,2,3,\ j=1,2. 
$$
\begin{propos}\label{SymP2}
If a molecule has reflection symmetry, and the symmetry plane is perpendicular 
to $\bm{m}_3$, 
then $G$ depends only on the following four elements of $\bar{P}$: 
$$
p_{ij},\quad i,j=1,2. 
$$
\end{propos}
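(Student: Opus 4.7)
The plan is to identify the reflection symmetry explicitly and then apply Theorem \ref{SymP} to a judicious choice of axis. Since the symmetry plane is perpendicular to $\bm{m}_3$, the unit normal is $\bm{\hat{k}}=\bm{m}_3$, so the relevant rotation by $\pi$ is $J=\mathrm{diag}(-1,-1,1)$ acting on the body-fixed basis. Theorem \ref{SymP} then gives $G(\bar{P})=G(J\bar{P}J)$.

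First I would carry out the direct matrix computation to see what $J\bar{P}J$ is. Sandwiching $\bar{P}=(p_{ij})$ between two diagonal sign matrices, the entries with both row and column indices in $\{1,2\}$, as well as the $(3,3)$ entry, are unchanged, while the four mixed entries $p_{13},p_{23},p_{31},p_{32}$ all pick up a sign. Thus Theorem \ref{SymP} says that $G$ is invariant under simultaneously reversing the signs of these four entries of $\bar{P}$, with $p_{11},p_{12},p_{21},p_{22},p_{33}$ held fixed.

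Second, I would invoke the $SO_3$ structure of $\bar{P}$ to show that the quadruple $(p_{11},p_{12},p_{21},p_{22})$ already determines $\bar{P}$ up to exactly this sign flip. The unit-length conditions on the first two rows pin down $p_{13}$ and $p_{23}$ up to sign, the inner-product relation $p_{11}p_{21}+p_{12}p_{22}+p_{13}p_{23}=0$ couples these two signs so that only a common overall flip is possible, and the third row is then fixed as the cross product of the first two. A short calculation will show that this common sign flip is precisely what propagates into the formula $p_{3j}=p_{1,j+1}p_{2,j+2}-p_{1,j+2}p_{2,j+1}$ noted in the text, yielding sign changes on $p_{31},p_{32}$ and no change on $p_{33}$. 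Hence the two orthogonal matrices compatible with fixed $p_{11},p_{12},p_{21},p_{22}$ are exactly $\bar{P}$ and $J\bar{P}J$.

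Combining the two steps, $G$ takes the same value on any two preimages of the quadruple $(p_{11},p_{12},p_{21},p_{22})$, so it descends to a function of these four variables alone. I do not expect a serious obstacle here; the only thing to watch is the degenerate locus where $p_{11}^2+p_{12}^2=1$ (equivalently $p_{13}=0$), but on that set the two preimages coincide and the statement is automatic, so continuity (or a direct substitution) finishes the argument.
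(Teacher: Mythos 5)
Your argument is correct and is essentially the paper's own proof: both apply Theorem \ref{SymP} with $J=\mathrm{diag}(-1,-1,1)$ and observe that a fixed quadruple $(p_{11},p_{12},p_{21},p_{22})$ determines $\bar{P}$ up to exactly the involution $\bar{P}\mapsto J\bar{P}J$ (the paper resolves the sign ambiguity via the columns, you via the rows, which is an immaterial difference). Your explicit treatment of the degenerate locus is a small point of added care that the paper elides with the phrase ``given properly.''
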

\begin{proof}
  When $p_{ij}(i,j=1,2)$ are given properly, $(p_{31},p_{32})$ might take two 
  possible pairs of value: $(y_1,y_2)$ and $(-y_1,-y_2)$, which satisfy
  \begin{eqnarray*}
  &&p_{11}^2+p_{21}^2+y_1^2=p_{12}^2+p_{22}^2+y_2^2=1, \\
  &&p_{11}p_{11}+p_{21}p_{22}+y_1y_2=0. 
  \end{eqnarray*}
  Using Theorem \ref{SymP}, we have
  $$
  G(J\bar{P}J)=G(\bar{P})
  $$
  with
  \begin{equation}\label{m3}
  J=\left(
  \begin{array}{ccc}
    -1&\ 0&\ 0\\
    0&\ -1&\ 0\\
    0&\ 0&\ 1
  \end{array}
  \right).
  \end{equation}
  Note that $J\bar{P}J$ leaves $p_{ij}(i,j=1,2)$ remained, 
  but changes the sign of $p_{31},p_{32}$, thus
  $$
  G(p_{11},p_{21},y_1,p_{12},p_{22},y_2)=G(p_{11},p_{21},-y_1,p_{12},p_{22},-y_2)
  =G(p_{11},p_{21},p_{12},p_{22}). 
  $$
\end{proof}
Next we examine the properties of $G$ when $\H$ contains a rotation of an angle 
$\theta$ around $\bm{m}_1$. The matrix that represents this rotation is
\begin{equation}
J_{\theta}=\left(
\begin{array}{ccc}
  1&\ 0&\ 0\\
  0&\ \cos\theta&\ -\sin\theta\\
  0&\ \sin\theta&\ \cos\theta
\end{array}
\right).
\end{equation}
Direct computation gives
\begin{equation}\label{Jtheta}
J_{\theta}\bar{P}(\alpha,\beta,\gamma)=\bar{P}(\alpha,\beta+\theta,\gamma),\quad
\bar{P}(\alpha,\beta,\gamma)J_{\theta}=\bar{P}(\alpha,\beta,\gamma+\theta)
\end{equation}
where $\bar{P}(\alpha,\beta,\gamma)$ is the representation of $\bar{P}$
by Euler angles.

\begin{propos}\label{Rot}
If $J_{\theta}\in\H$, then 
\begin{equation}\label{Rot0}
  G\big(\bar{P}(\alpha,\beta,\gamma)\big)=G\big(\bar{P}(\alpha,\beta+\theta,\gamma)\big)
  =G\big(\bar{P}(\alpha,\beta,\gamma+\theta)\big). 
\end{equation}
\end{propos}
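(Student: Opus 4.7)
The plan is to recognize that Proposition \ref{Rot} is essentially a direct translation of Theorem \ref{TInv} into the $\bar{P}$-variable, using the Euler-angle identities (\ref{Jtheta}) to interpret left/right multiplication by $J_\theta$ as shifts of $\beta$ and $\gamma$ respectively.

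First I would rewrite Theorem \ref{TInv} in terms of $\bar{P} = P^{-1}P'$. Substituting $P \mapsto PT$ changes $\bar{P}$ to $(PT)^{-1}P' = T^{-1}\bar{P}$, while substituting $P' \mapsto P'T$ changes $\bar{P}$ to $\bar{P}T$. Since Proposition \ref{RelP} tells us $G$ is a function of $\bar{P}$ alone, Theorem \ref{TInv} is equivalent to the statement
\begin{equation*}
G(T^{-1}\bar{P}) = G(\bar{P}T) = G(\bar{P}) \qquad \forall\, T \in \H.
\end{equation*}

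Next, since $\H$ is a subgroup of $SO_3$, the hypothesis $J_\theta \in \H$ gives both $J_\theta \in \H$ and $J_\theta^{-1} = J_{-\theta} \in \H$. Applying the displayed identity with $T = J_{-\theta}$ yields $G(J_\theta \bar{P}) = G(\bar{P})$, and with $T = J_\theta$ yields $G(\bar{P} J_\theta) = G(\bar{P})$. Combining these with the two identities of (\ref{Jtheta}),
\begin{equation*}
J_\theta \bar{P}(\alpha,\beta,\gamma) = \bar{P}(\alpha,\beta+\theta,\gamma), \qquad \bar{P}(\alpha,\beta,\gamma) J_\theta = \bar{P}(\alpha,\beta,\gamma+\theta),
\end{equation*}
produces exactly the two equalities claimed in (\ref{Rot0}).

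There is no real obstacle here; the only care needed is bookkeeping in passing from the left/right actions on $(P,P')$ in Theorem \ref{TInv} to the corresponding left/right actions on $\bar{P}$, noting the inversion on the left side. Once that is set up, (\ref{Jtheta}) converts matrix multiplications into Euler-angle shifts and the proposition follows in one line.
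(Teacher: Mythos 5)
Your proof is correct and follows essentially the same route as the paper: translate Theorem \ref{TInv} into the $\bar{P}$-variable to get $G(J_\theta\bar{P})=G(\bar{P}J_\theta)=G(\bar{P})$, then apply the Euler-angle identities (\ref{Jtheta}). Your explicit handling of the inversion on the left action (invoking $J_\theta^{-1}=J_{-\theta}\in\H$ because $\H$ is a subgroup) is a detail the paper's one-line proof glosses over, but it is the same argument.
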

\proof
Using Theorem \ref{TInv}, we have
\begin{eqnarray*}
  G(\bar{P})=G(\bar{P}J_{\theta}),\qquad
  G(\bar{P})=G(J_{\theta}\bar{P}). 
\end{eqnarray*}
Along with (\ref{Jtheta}), we obtain (\ref{Rot0}). 
\qed

In the following theorem, $\bm{m}_1$ always coincides with the rotational axis. 
\begin{thm}\label{SymB}
\begin{enumerate}
\item 
For a molecule with $C_{2v}$ symmetry, $G$ is a function of $p_{11},p_{12},p_{21},p_{22}$, with
\begin{equation}\label{Sym_m1}
  G(p_{11},p_{12},p_{21},p_{22})=G(p_{11},-p_{12},p_{21},-p_{22})
  =G(p_{11},p_{12},-p_{21},-p_{22}). 
\end{equation}
\item
For a molecule with $C_{\infty}$ symmetry, $G$ is a function of 
$p_{11}=\bm{m}_1\cdot\bm{m'}_1$. If the molecule has $D_{\infty h}$ symmetry, 
$G$ is a function of $|p_{11}|$. 
\end{enumerate}
\end{thm}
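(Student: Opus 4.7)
The plan is to handle the $C_{2v}$ case first and then the axial cases, in each subcase decomposing the symmetry group $\H$ into convenient generators and invoking the results already proved: Theorems~\ref{TInv} and~\ref{SymP}, and Propositions~\ref{SymP2} and~\ref{Rot}. The reduction from the six independent entries of $\bar P$ to those listed in the statement should always arise as a combination of a reflection (eliminating a column/row via SymP2) with a rotation (giving sign flips).

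For part~1, I view $C_{2v}$ as generated by a mirror plane containing the principal axis $\bm{m}_1$ together with the twofold rotation $J_\pi$ around $\bm{m}_1$. Choosing the mirror plane to be $\bm{m}_1\bm{m}_2$, its normal is $\bm{m}_3$, so Proposition~\ref{SymP2} immediately forces $G$ to depend only on $p_{11},p_{12},p_{21},p_{22}$. Next, since $J_\pi\in\H$, Theorem~\ref{TInv} yields $G(\bar P)=G(J_\pi\bar P)=G(\bar P J_\pi)$. With
$$J_\pi=\left(\begin{array}{ccc}1&0&0\\0&-1&0\\0&0&-1\end{array}\right),$$
left multiplication by $J_\pi$ negates rows $2,3$ of $\bar P$ and right multiplication negates columns $2,3$; restricted to the surviving block $p_{ij}$, $i,j=1,2$, these two identities become exactly the two sign flips asserted in~(\ref{Sym_m1}).

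For part~2, $C_\infty$ contains $J_\theta$ for every $\theta$, so Proposition~\ref{Rot} applied for all $\theta$ shows that $G(\bar P(\alpha,\beta,\gamma))$ is independent of both $\beta$ and $\gamma$; it therefore depends only on $\alpha\in[0,\pi]$. Reading off~(\ref{EulerRep}) gives $p_{11}=\cos\alpha$, and since $\cos$ is a bijection from $[0,\pi]$ to $[-1,1]$, this is equivalent to $G$ being a function of $p_{11}$. For $D_{\infty h}$ one adds a horizontal $C_2$-axis, for instance the rotation by $\pi$ around $\bm{m}_2$, represented by a matrix $T'$ with diagonal $(-1,1,-1)$; Theorem~\ref{TInv} then gives $G(T'\bar P)=G(\bar P)$, and $(T'\bar P)_{11}=-p_{11}$. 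Combined with the $C_\infty$ reduction already obtained, this forces $G(p_{11})=G(-p_{11})$, i.e.\ $G$ is a function of $|p_{11}|$.

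The main obstacle is pure bookkeeping rather than any new conceptual step. One has to choose the generators of $\H$ so that Proposition~\ref{SymP2} cleanly eliminates the third row/column and the remaining rotations act on the residual $2\times 2$ block as simply as possible, and one must keep straight whether each symmetry should be applied by left- or right-multiplication on $\bar P$ (in Theorem~\ref{TInv} the two orderings come from symmetrizing the first versus the second molecule, producing different sign patterns). The only genuinely new piece of information used is the identification $p_{11}=\cos\alpha$, which is read off directly from the Euler-angle matrix~(\ref{EulerRep}).
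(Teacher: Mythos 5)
Your proof is correct and follows essentially the same route as the paper: Proposition~\ref{SymP2} to eliminate the third row and column, Theorem~\ref{TInv} with $J_\pi$ acting on left and right for the sign flips, and Proposition~\ref{Rot} for the axial case. The only (immaterial) difference is that for $D_{\infty h}$ you use the $\pi$-rotation about $\bm{m}_2$ acting on the left, whereas the paper uses the $\pi$-rotation about $\bm{m}_3$ acting on the right; both negate $p_{11}$ and yield $G(p_{11})=G(-p_{11})$.
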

\proof
\begin{enumerate}
\item 
Theorem \ref{TInv} gives $G(\bar{P})=G(J_{\pi}\bar{P})=G(\bar{P}J_{\pi})$. 
By Proposition \ref{SymP2}, (\ref{Sym_m1}) holds. 
\item
  Axially symmetry means that Proposition \ref{Rot} is valid with arbitrary 
  $\theta$. Therefore
  \begin{equation}\label{AxiSym1}
  G\big(\bar{P}(\alpha,\beta,\gamma)\big)=G\big(\bar{P}(\alpha,0,0)\big)
  =G(\cos\alpha)=G(p_{11}). 
  \end{equation}
  Like in Theorem \ref{SymP}, we suppose that the plane contains $\hat{O}$. 
  Note that the rotation of $\pi$ around $\bm{m}_3$, represented by $J$ defined 
  in (\ref{m3}), is contained in $D_{\infty h}$. By Theorem \ref{TInv}, 
  $G(\bar{P}J)=G(\bar{P})$. We deduce from (\ref{AxiSym1}) that
  $$
  G(\bar{P})=G(p_{11})=G(-p_{11})=G(|p_{11}|). 
  $$
\end{enumerate}
\qed

With the above discussion, we are able to construct polynomial approximations 
for molecules with different symmetries. We start from the approximate 
kernel function of molecules with $D_{\infty h}$ symmetry. 
In Theorem \ref{SymB} we have proven that 
$G=G(|\bm{m}_1\cdot\bm{m'}_1|)$. Its approximation should be a polynomial of 
$\bm{m}_1\cdot\bm{m'}_1$ without odd-degree terms. Therefore it is at least
quadratic, which coincides with the form of Maier-Saupe potential: 
$$
G=c_0+c_2(\bm{m}_1\cdot\bm{m'}_1)^2. 
$$
The above form indicates that $\M=\big\{\left<\bm{m}_1\bm{m}_1\right>\big\}$. 

When a molecule has only $C_{\infty}$ symmetry, odd-degree terms of $p_{11}$ 
no longer vanishes. Quadratic approximation will be
\begin{equation}\label{PolRod}
G=c_0+c_1(\bm{m}_1\cdot\bm{m'}_1)+c_2(\bm{m}_1\cdot\bm{m'}_1)^2, 
\end{equation}
which is discussed in \cite{Dipol}. When this kernel is used, 
$\M=\big\{\left<\bm{m}_1\right>,\ \left<\bm{m}_1\bm{m}_1\right>\big\}$. 

Now we turn to the approximations of kernel function for molecules with 
$C_{2v}$ symmetry, including bent-core molecules and spherotriangles.
By Proposition \ref{SymP2}, an approximation of $G$ is a polynomial of four 
variables $p_{11},p_{12},p_{21},p_{22}$. Then by Proposition \ref{RelP}, it is 
symmetric with respect to $p_{12}$ and $p_{21}$, which means
$$
G=G(p_{11},p_{22},p_{12}+p_{21},p_{12}p_{21}). 
$$
Using (\ref{Sym_m1}), we are able to determine the form of polynomial. 
Quadratic approximation is written as
\begin{equation}\label{QuaApp}
  G=c_0+c_1p_{11}+c_2p_{11}^2+c_3p_{22}^2+c_4(p_{12}^2+p_{21}^2). 
\end{equation}
Cubic approximation is written as 
\begin{equation}\label{CubApp}
  G=c_0+c_1p_{11}+c_2p_{11}^2+c_3p_{22}^2+c_4(p_{12}^2+p_{21}^2)
  +c_5p_{11}^3+c_6p_{11}p_{22}^2+c_7p_{11}(p_{12}^2+p_{21}^2)+c_8p_{12}p_{21}p_{22}.
\end{equation}
For quadratic approximation, 
$$
\M=\big\{\left<\bm{m}_1\right>, 
\left<\bm{m}_1\bm{m}_1\right>, \left<\bm{m}_2\bm{m}_2\right>\big\}; 
$$
for cubic approximation, 
$$
\M=\big\{\left<\bm{m}_1\right>, 
\left<\bm{m}_1\bm{m}_1\right>, \left<\bm{m}_2\bm{m}_2\right>, 
\left<\bm{m}_1\bm{m}_1\bm{m}_1\right>,\left<\bm{m}_1\bm{m}_2\bm{m}_2\right>\big\}.
$$

From the above discussion we know that the form of polynomial approximation 
is determined by molecular symmetries. 
The coefficients $c_i$ can be calculated by projecting $G$ to the space 
spanned by all the polynomials of the given form. If the approximation has 
the form
$$
\sum_i c_iq_i(\bar{P}), 
$$
then the coefficients $c_i$ are determined by 
$$
\sum_i \left[\int_{SO_3}\d \nu(\bar{P})q_i(\bar{P})q_j(\bar{P})\right]c_i
=\int_{SO_3}\d \nu(\bar{P})G(\bar{P};\Theta)q_j(\bar{P}). 
$$
In the above, $\Theta$ is a set that consists of temperature and a group of 
molecular parameters. The formula reveals that these coefficients are functions 
of $\Theta$. Generally speaking, as temperature is included in $\Theta$, the 
approximate $G$ is able to describe both lyotropic and thermotropic liquid 
crystals. 

The projection of Onsager potential to span$\{1,p_{11}^2\}$ gives
\begin{equation}
  c_2=-\frac{15\pi}{32}cL^2D. 
\end{equation}
As a constant difference in $G$ does not affect the solution, $c_0$ is ignored. 
It is easy to see that $c_2$ is propotional to one effective parameter $cL^2D$. 
The projection of the excluded-volume potential of spherocuboids is derived by 
R. Rosso and E. G. Virga in \cite{quadproj}. In Appendix, we discuss the 
projection of excluded-volume potential of isoceles spherotriangles to the 
space of quadratic approximations. Suppose that the top corner is $\theta$ and 
the length of lateral is $L/2$. The results are
\begin{eqnarray}
  c_2&=&-\frac{15}{64}cL^3\sin\theta\cos^2\frac{\theta}{2}
  -\frac{15\pi}{128}cL^2D\cos^4\frac{\theta}{2},\label{c_2}\\
  c_3&=&-\frac{15}{64}cL^3\sin\theta\sin\frac{\theta}{2}(1+\sin\frac{\theta}{2})
  -\frac{15\pi}{128}cL^2D\sin^2\frac{\theta}{2}(1+\sin\frac{\theta}{2})^2, \label{c_3}\\
  c_4&=&-\frac{15}{128}cL^3\sin\theta(1+\sin\frac{\theta}{2})
  -\frac{15\pi}{128}cL^2D\cos^2\frac{\theta}{2}\sin\frac{\theta}{2}
  (1+\sin\frac{\theta}{2}). \label{c_4}
\end{eqnarray}
And $c_1$ is proportional to $cL^2D$ with
$$
c_1=\frac{3}{8}cL^2DK(\theta), 
$$
where $K(\theta)$ is a function of $\theta$ defined in (\ref{Ktheta}). 

\section{Further analysis and the choice of order parameters}\label{Ord}
In the previous section we select some moments and reduce the density 
functional theory to a group of equations about them. Those equations usually 
imply some properties of the moments. They could help us to choose independent 
components of the moments as order parameters. Here we try to extract these 
properties. Some of them depend on the values of coefficients. As the 
coefficients are determined by molecular parameters and temperature, 
these properties would reveal the impacts of them. 

When $G$ takes Maier-Saupe potential, the only moment in $\M$ is 
$\left<\bm{m}_1\bm{m}_1\right>$. It can be diagonalized by selecting 
axes along its eigenvectors. Its trace equals to $1$, leaving only two degrees 
of freedom remained. These two degrees of freedom could be further reduced to 
one by the proof of uniaxial property\cite{AxiSymMS,AxiSym2,AxiSym3}. 
\begin{thm}[Axially symmetry of the solution with Maier-Saupe potential]
If $G$ takes Maier-Saupe potential (\ref{MS}), every solution of (\ref{EL}) is 
axially symmetric
\begin{equation}\label{Uniaxi}
  f=f(\bm{m}_1\cdot\bm{n})=C\exp(-\eta(\bm{m}_1\cdot\bm{n})^2). 
\end{equation}
\end{thm}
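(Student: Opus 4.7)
The plan is to turn the Euler--Lagrange equation into an explicit algebraic fixed-point problem in three real numbers and then show, as the substantive step, that two of them must coincide.

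First I would substitute the Maier--Saupe kernel $G(P,P') = C - C(\bm{m}_1\cdot\bm{m}'_1)^2$ into the definition (\ref{DefU}) of $W(P)$. The constant $cC$ is absorbed into the Lagrange multiplier $\lambda$ in (\ref{EL}), and the remaining term contracts against $f(P')\d\nu(P')$ to give $-cC\,\bm{m}_1^T Q\bm{m}_1$, where $Q = \left<\bm{m}'_1\bm{m}'_1\right>$ is the second moment. Plugging into the Boltzmann form (\ref{Boltz}) yields the explicit representation
\[
f(P) = Z^{-1}\exp\!\big(\eta\,\bm{m}_1^T Q\bm{m}_1\big), \qquad \eta := cC,
\]
so $f$ factors through $\bm{m}_1\in S^2$. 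This matches the reduction of configuration space for $C_{\infty}$-type densities already derived in Section~\ref{Sy}.

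Next I would diagonalize the symmetric matrix $Q$ in an orthonormal eigenbasis $\bm{n}_1,\bm{n}_2,\bm{n}_3$ with eigenvalues $\lambda_1,\lambda_2,\lambda_3$ satisfying $\lambda_1+\lambda_2+\lambda_3 = \mathrm{tr}\,Q = 1$. Writing $u_i = \bm{m}_1\cdot\bm{n}_i$, the density becomes $f\propto \exp(\eta\sum_i \lambda_i u_i^2)$. The $u_i\mapsto -u_i$ reflection symmetry of this density kills off-diagonal second moments, so the self-consistency $Q = \left<\bm{m}_1\bm{m}_1\right>$ reduces to the three scalar equations $\lambda_i = \left<u_i^2\right>$, $i=1,2,3$.

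The heart of the proof, and the main obstacle, is showing that some two $\lambda_i$ must be equal. Using $\sum u_i^2 = 1$, I normalize $f\propto \exp(a_1 u_1^2 + a_2 u_2^2)$ with $a_i = \eta(\lambda_i - \lambda_3)$, so that self-consistency is reduced to the two-dimensional fixed-point system $a_i = \eta(\left<u_i^2\right>-\left<u_3^2\right>)$, $i=1,2$. Passing to spherical coordinates with $u_3 = \cos\alpha$ and integrating out the azimuthal angle produces the modified Bessel functions $I_0, I_1$; setting $s = (a_1+a_2)/2$, $t = (a_1-a_2)/2$, one finds
\[
\left<u_1^2\right>-\left<u_2^2\right> \;=\; \frac{2\pi}{Z(s,t)}\int_0^\pi \sin^3\!\alpha\, e^{s\sin^2\alpha}\,I_1(t\sin^2\alpha)\,\d\alpha.
\]
Since $I_1$ is strictly increasing and $I_1(0)=0$, a careful comparison of this integral with $2t/\eta$ along the symmetry-reduced $S_3$-orbit, of the type carried out in \cite{AxiSymMS,AxiSym2,AxiSym3}, rules out any solution with all three of $a_1,a_2,a_1-a_2$ nonzero. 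Consequently every fixed point lies on one of the three ``uniaxial lines'' $\{a_1=a_2\}\cup\{a_1=0\}\cup\{a_2=0\}$, which is exactly the condition that two of $\lambda_1,\lambda_2,\lambda_3$ coincide.

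Finally, assuming WLOG $\lambda_i=\lambda_j$ and letting $k$ be the remaining index with eigenvector $\bm{n}$, one rewrites $\eta\sum_\ell \lambda_\ell u_\ell^2 = \eta\lambda_i - \eta(\lambda_i-\lambda_k)(\bm{m}_1\cdot\bm{n})^2$ and absorbs $e^{\eta\lambda_i}$ into $Z^{-1}$ to obtain (\ref{Uniaxi}) with effective parameter $\eta(\lambda_i-\lambda_k)$. Everything outside Step~3 is bookkeeping; the entire difficulty is the analytic exclusion of biaxial fixed points, which is where the specific structure of the Maier--Saupe potential and the monotonicity of $I_1/I_0$ (or equivalently a convexity argument for the $S_3$-invariant free energy on the traceless subspace) is essential.
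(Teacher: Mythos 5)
The paper does not actually prove this theorem: it is quoted from the literature, with the entire burden of proof carried by the citations \cite{AxiSymMS,AxiSym2,AxiSym3}. Measured against that, your proposal does strictly more than the paper --- the reduction in your first, second and final steps is correct and is exactly the standard setup of those references: $W(P)=cC-cC\,\bm{m}_1^TQ\bm{m}_1$ with $Q=\left<\bm{m}_1\bm{m}_1\right>$, so $f$ factors through $\bm{m}_1\in S^2$, the reflection symmetries force $Q$ diagonal in its own eigenbasis, and self-consistency becomes the two-dimensional fixed-point problem for $a_1,a_2$; your Bessel-function identity for $\left<u_1^2\right>-\left<u_2^2\right>$ checks out, as does the final rewriting into the form (\ref{Uniaxi}).

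The one genuine gap is precisely where you locate it: Step 3. The observation that $I_1$ is increasing with $I_1(0)=0$ only shows that $\left<u_1^2\right>-\left<u_2^2\right>$ has the same sign as $t=(a_1-a_2)/2$, which is \emph{consistent} with the fixed-point equation $2t/\eta=\left<u_1^2\right>-\left<u_2^2\right>$ rather than in tension with it; sign and monotonicity information alone cannot exclude a biaxial solution. The actual exclusion in \cite{AxiSymMS,AxiSym2,AxiSym3} requires a delicate quantitative argument (reducing to counting zeros of an explicit one-variable function, or a careful study of the $S_3$-invariant free energy restricted to the traceless subspace), and none of that is reproduced in your sketch --- you defer it to the same references the paper cites. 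So as a self-contained proof the proposal is incomplete at its central step; as a reduction-plus-citation it is a faithful, and more explicit, account of what the paper itself relies on.
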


When $\M$ has more than one moments, there are usually some relations between 
them. In \cite{Dipol} the following conclusion is shown, 
which reduces the number of order parameters for polar rods to $3$.
\begin{thm}\label{m1_0_org}
When $G$ takes (\ref{PolRod}), we have 
\begin{enumerate}
\item $\left<\bm{m}_1\right>$ parallels to one of the eigenvectors of 
$\left<\bm{m}_1\bm{m}_1\right>$. 
\item If $-c_1\le 1$ in (\ref{PolRod}), $\left<\bm{m}_1\right>=0$. 
\end{enumerate}
\end{thm}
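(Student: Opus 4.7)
The plan is to exploit the Boltzmann form of the solution (\ref{Boltz}) forced by the polynomial kernel. Since $G$ in (\ref{PolRod}) is a polynomial of $\bm{m}_{1}\cdot\bm{m}'_{1}$ of degree two, the self-consistent potential $W$ in (\ref{DefU}) depends on $f$ only through $\bm{v}=\left<\bm{m}_{1}\right>$ and $S=\left<\bm{m}_{1}\bm{m}_{1}\right>$. Up to an additive constant,
\[
W(\bm{m})=c_{1}\,\bm{v}\cdot\bm{m}+c_{2}\,\bm{m}^{T}S\bm{m},
\]
so, after the reduction to $S^{2}$ from Section \ref{Sy},
\[
f(\bm{m})\propto\exp\bigl(-c_{1}\,\bm{v}\cdot\bm{m}-c_{2}\,\bm{m}^{T}S\bm{m}\bigr),\qquad \bm{m}\in S^{2}.
\]

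For Part (1), I would pick an orthonormal basis $\{\bm{\xi}_{i}\}$ of eigenvectors of $S$ with eigenvalues $s_{i}$. Writing $\xi_{i}=\bm{\xi}_{i}\cdot\bm{m}$, $a_{i}=-c_{1}(\bm{v}\cdot\bm{\xi}_{i})$, $b_{i}=-c_{2}s_{i}$, the density becomes $f\propto\exp\bigl(\sum_{i}a_{i}\xi_{i}+\sum_{i}b_{i}\xi_{i}^{2}\bigr)$. Because the basis diagonalizes $S$, self-consistency forces $\left<\xi_{i}\xi_{j}\right>_{f}=0$ for every $i\ne j$. Expanding $\exp(\sum_{k}a_{k}\xi_{k})$ as a power series and integrating term by term against $\xi_{i}\xi_{j}\exp(\sum_{k}b_{k}\xi_{k}^{2})$ on $S^{2}$, the parities $\xi_{k}\mapsto-\xi_{k}$ annihilate every monomial in which some $\xi_{k}$ appears to an odd total power. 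Combined with the prefactor $\xi_{i}\xi_{j}$, this forces each surviving term to carry an odd power of both $a_{i}$ and $a_{j}$. Factoring out $a_{i}a_{j}$ gives
\[
\left<\xi_{i}\xi_{j}\right>_{f}=a_{i}a_{j}\,N_{ij},
\]
where every coefficient of the power series $N_{ij}$ is a positive multiple of a strictly positive spherical moment of $\exp(\sum_{k}b_{k}\xi_{k}^{2})$; hence $N_{ij}>0$. Therefore $\left<\xi_{i}\xi_{j}\right>_{f}=0$ forces $v_{i}v_{j}=0$ for every pair $i\ne j$, so at most one component of $\bm{v}$ in the eigenbasis of $S$ is nonzero, i.e.\ $\bm{v}$ is parallel to some eigenvector of $S$. (In the degenerate case $s_{i}=s_{j}$, one simply chooses the basis inside the degenerate eigenspace to align with $\bm{v}$, and the conclusion is automatic.)

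For Part (2), Part (1) lets me take $\bm{v}=v\bm{\xi}_{1}$ after reindexing, so the self-consistency reduces to the scalar fixed-point equation $v=\tilde{v}(-c_{1}v)$ with
\[
\tilde{v}(a)=\frac{\int_{S^{2}}\xi_{1}\,e^{a\xi_{1}+Q(\bm{\xi})}\,\d\bm{m}}{\int_{S^{2}}e^{a\xi_{1}+Q(\bm{\xi})}\,\d\bm{m}},\qquad Q=\sum_{k}b_{k}\xi_{k}^{2}.
\]
Since $Q$ is even in $\xi_{1}$, $\tilde{v}$ is odd with $\tilde{v}(0)=0$; differentiating under the integral gives $\tilde{v}'(a)=\mathrm{Var}_{a}(\xi_{1})\le\left<\xi_{1}^{2}\right>_{a}<1$ strictly, because $\xi_{1}^{2}+\xi_{2}^{2}+\xi_{3}^{2}=1$ on $S^{2}$ and the density is nondegenerate. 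Let $\Phi(v)=v-\tilde{v}(-c_{1}v)$; then $\Phi'(v)=1+c_{1}\tilde{v}'(-c_{1}v)$. Under the hypothesis $-c_{1}\le 1$ this is bounded below by $1-\tilde{v}'(-c_{1}v)>0$ (whether $c_{1}\ge 0$ or $-1\le c_{1}<0$), so $\Phi$ is strictly increasing. Thus $v=0$ is the unique root of $\Phi(v)=0$, giving $\left<\bm{m}_{1}\right>=0$.

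The main obstacle is the careful power-series bookkeeping in Part (1): one must verify that the extracted factor $N_{ij}$ is strictly positive (not merely nonzero), which relies on every series coefficient being a positive spherical moment, and that termwise integration is justified (which follows easily from compactness of $S^{2}$). Once Part (1) is secured, Part (2) is a one-variable monotonicity argument built on the strict upper bound $\tilde{v}'<1$ on $S^{2}$.
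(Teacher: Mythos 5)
Your argument is correct, but it is worth noting that the paper does not actually prove Theorem \ref{m1_0_org} itself (it cites \cite{Dipol}); the in-paper incarnation of these ideas is the proof of Theorem \ref{MoB}, the $C_{2v}$ generalization, and your route differs from that one in packaging. The parity extraction you perform by expanding $\exp\big(\sum_k a_k\xi_k\big)$ in a power series and discarding monomials of odd total degree is done there by symmetrizing over the finite set of sign-flip rotations $J_1PJ_3$, $J_2PJ_3$, $J_1J_2P$, which produces a factor $\sinh(-c_1r_1m_{11})\sinh(-c_1r_2m_{12})$ under the integral in closed form; strict positivity of that integrand plays exactly the role of your $N_{ij}>0$, and your series is just the Taylor expansion of those hyperbolic factors, so the two devices are equivalent (the closed form spares the termwise bookkeeping, your version makes the positivity of each coefficient explicit). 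For the vanishing of $\left<\bm{m}_1\right>$, the paper works directly with the vector: it dots the $\sinh/\cosh$ identity with $\left<\bm{m}_1\right>$ and uses the pointwise inequality $x\tanh x<x^2$ together with $|\bm{m}_1|=1$, so it does not need the parallelism result first; you instead invoke Part (1) to reduce to a scalar fixed point and bound $\tilde v'=\mathrm{Var}(\xi_1)\le\left<\xi_1^2\right><1$, which is the same estimate in derivative form. Consequently your logical order of the two parts is the reverse of the paper's; both orders are free of circularity. One small point to patch: in Part (1) you pass from $a_ia_jN_{ij}=0$ to $v_iv_j=0$, which requires $c_1\ne 0$; when $c_1=0$ the density is even in every $\xi_k$, so $\left<\bm{m}_1\right>=0$ and the claim is vacuous, but this case should be stated.
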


From now on we will focus on the kernel (\ref{QuaApp}):
$$
  G=c_1p_{11}+c_2p_{11}^2+c_3p_{22}^2+c_4(p_{12}^2+p_{21}^2)
$$
where $c_0$ is set to zero, for it does not affect the solutions. 
$W(P)$ is written as
\begin{eqnarray*}
W(P)&=&c_1\left<\bm{m}_1\right>\cdot\bm{m}_1
+\big(2c_2\left<\bm{m}_1\bm{m}_1\right>+c_4\left<\bm{m}_2\bm{m}_2\right>\big)
:\bm{m}_1\bm{m}_1\\
&&+\big(2c_3\left<\bm{m}_2\bm{m}_2\right>+c_4\left<\bm{m}_1\bm{m}_1\right>\big)
:\bm{m}_2\bm{m}_2\\
&=&c_1\left<\bm{m}_1\right>\cdot\bm{m}_1+W_1(P). 
\end{eqnarray*}
Write down the components of $\bm{m}_i$ as
$$
(\bm{m}_1,\bm{m}_2,\bm{m}_3)=(\bm{e}_1,\bm{e}_2,\bm{e}_3)\left(
\begin{array}{ccc}
m_{11}&m_{21}&m_{31}\\
m_{12}&m_{22}&m_{32}\\
m_{13}&m_{23}&m_{33}
\end{array}
\right).
$$
Recalling the equality (\ref{RotP}), we know that $m_{ij}$ are the elements of 
$P$. The next theorem contains a direct extension of the second part of Theorem 
\ref{m1_0_org}, and discusses the relationship of axes of three moments 
$\left<\bm{m}_{1}\right>,\left<\bm{m}_{1}\bm{m}_{1}\right>,
\left<\bm{m}_{2}\bm{m}_{2}\right>$. 

\begin{thm}\label{MoB}
\begin{enumerate}
\item If $-c_1\le 1$, $\left<\bm{m}_1\right>=0$. 
\item If $\left<\bm{m}_1\bm{m}_1\right>$ and $\left<\bm{m}_2\bm{m}_2\right>$ 
can be diagonalized simutaneously, $\left<\bm{m}_1\right>$ parallels to one of 
the eigenvectors of $\left<\bm{m}_1\bm{m}_1\right>$. 
\item If $c_1\ge -1$ and $c_4^2=c_2c_3$, $\left<\bm{m}_1\bm{m}_1\right>$ and 
$\left<\bm{m}_2\bm{m}_2\right>$ can be diagonalized simutaneously by the axes 
of $\left<d_1\bm{m}_1\bm{m}_1+d_2\bm{m}_2\bm{m}_2\right>$, where 
$$
c_2=\pm d_1^2,\ c_3=\pm d_2^2,\ c_4=d_1d_2. 
$$
\item If $\left<d_1\bm{m}_1\bm{m}_1+d_2\bm{m}_2\bm{m}_2\right>$ is uniaxial, and 
$\left<\bm{m}_1\right>$ parallels to the axis, then both 
$\left<\bm{m}_1\bm{m}_1\right>$ and $\left<\bm{m}_2\bm{m}_2\right>$ are 
uniaxial. 
\end{enumerate}
\end{thm}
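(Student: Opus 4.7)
The Euler--Lagrange equation gives $f(P) \propto \exp(-c_1\bm{v}\cdot\bm{m}_1 - W_1(P))$, where $\bm{v} = \langle\bm{m}_1\rangle$ and $W_1$ is a quadratic form in the \emph{even} tensors $\bm{m}_1\bm{m}_1, \bm{m}_2\bm{m}_2$ with coefficients linear in $S = \langle\bm{m}_1\bm{m}_1\rangle$ and $T = \langle\bm{m}_2\bm{m}_2\rangle$. The polar term $c_1\bm{v}\cdot\bm{m}_1$ is the only contribution odd in $\bm{m}_1$. For Part~1 I apply the Haar-preserving right multiplication $P \mapsto PJ'$ with $J' = \mathrm{diag}(-1,-1,1)\in SO_3$, which flips $\bm{m}_1$ and $\bm{m}_2$ but preserves each $\bm{m}_i\bm{m}_i$ and hence $W_1$. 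Averaging the original and substituted self-consistent equation for $\bm{v}$ yields
\[
\bm{v} = -\frac{1}{Z}\int\bm{m}_1\,\sinh(c_1\bm{v}\cdot\bm{m}_1)\,e^{-W_1}\,d\nu;
\]
dotting with $\bm{v}$, using $|\bm{v}\cdot\bm{m}_1|\le|\bm{v}|$ and the monotonicity of $\sinh(x)/x$ on $[0,\infty)$, I would mimic the polar-rod argument of \cite{Dipol} to get $|\bm{v}|^2\le 0$ whenever $-c_1\le 1$.

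\textbf{Part 3.} When $c_4^2 = c_2 c_3$ one sets $R(P) = d_1\bm{m}_1\bm{m}_1 + d_2\bm{m}_2\bm{m}_2$ with $(c_2, c_3, c_4) = (\pm d_1^2, \pm d_2^2, d_1 d_2)$ (both signs agreeing); then the quadratic part of $G$ factors as $\pm R(P):R(P')$, so $W_1(P)\propto\langle R\rangle:R(P)$ depends on $f$ only through the single tensor moment $\langle R\rangle$. Diagonalize $\langle R\rangle$ in a space-fixed basis $\bm{e}_1,\bm{e}_2,\bm{e}_3$; in this basis $W_1$ is a linear combination of $m_{1j}^2$ and $m_{2j}^2$ alone, hence invariant under the Klein four-group $V_4\subset SO_3$ of diagonal sign flips $\mathrm{diag}(\epsilon_1,\epsilon_2,\epsilon_3)$ with $\prod\epsilon_k = +1$, acting by left multiplication. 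Since $c_1\ge -1$ forces $\bm{v}=0$ by Part~1, $f$ inherits the full $V_4$-symmetry; for any $i\ne j$ one picks $J\in V_4$ with $\epsilon_i\epsilon_j = -1$, and the change of variable $P\mapsto JP$ gives $S_{ij} = \epsilon_i\epsilon_j S_{ij} = -S_{ij}$, so $S_{ij} = 0$, and likewise $T_{ij} = 0$.

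\textbf{Parts 2 and 4.} For Part~2, combine the $V_4$-symmetry acting on the left (from simultaneous diagonality of $S,T$) with the right-multiplication $P\mapsto PJ'$ of Part~1. Under the latter, $m_{1i}m_{1j}$ is invariant while $\bm{v}\cdot\bm{m}_1\to-\bm{v}\cdot\bm{m}_1$, making $\tilde{S}_{ij}(\bm{v}) := \int m_{1i}m_{1j}f_{\bm{v}}d\nu$ even in $\bm{v}$; combined with the $V_4$-equivariance $\tilde{S}_{ij}(J\bm{v}) = \epsilon_i\epsilon_j\,\tilde{S}_{ij}(\bm{v})$, the minimal-degree admissible monomial in the Taylor expansion is $v_iv_j$ for each $i\ne j$, so
\[
\tilde{S}_{ij}(\bm{v}) = v_iv_j\cdot\beta_{ij}(\bm{v}), \qquad \beta_{ij}(0) = \frac{c_1^2}{Z_0}\,\langle m_{1i}^2 m_{1j}^2\rangle_{W_1} > 0
\]
(the case $c_1 = 0$ gives $\bm{v}=0$ immediately). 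The diagonality requirement $\tilde{S}_{ij}(\bm{v}) = 0$ for all $i\ne j$ thus forces $v_1v_2 = v_1v_3 = v_2v_3 = 0$, i.e.\ $\bm{v}$ lies along one of $\bm{e}_1,\bm{e}_2,\bm{e}_3$. For Part~4, in the Part~3 setting with $\langle R\rangle = r_1\bm{e}_1\bm{e}_1 + r_\perp(I-\bm{e}_1\bm{e}_1)$ uniaxial, using $\mathrm{tr}\,R(P) = d_1+d_2$ one reduces $W_1(P)$ to an affine function of $d_1 m_{11}^2 + d_2 m_{21}^2$ alone; with $\bm{v}=v\bm{e}_1$ the polar term is $c_1 v m_{11}$, also a function of $m_{11}$ only. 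Hence $f(P)$ depends on $P$ only through $(m_{11},m_{21}) = (\bm{m}_1\cdot\bm{e}_1,\bm{m}_2\cdot\bm{e}_1)$. Left multiplication by any rotation $R_\psi$ about $\bm{e}_1$ preserves both invariants and $d\nu$, so $f$ is $SO(2)$-invariant; pulling this through the tensor integrals yields $R_\psi S R_\psi^\top = S$ and $R_\psi T R_\psi^\top = T$ for all $\psi$, i.e.\ $S$ and $T$ are uniaxial about $\bm{e}_1$.

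\textbf{Main obstacle.} The crux is the factorization step in Part~2: while the combined left-$V_4$ and right-$J'$ symmetries cleanly give $\tilde{S}_{ij}(\bm{v}) = v_iv_j\,\beta_{ij}(\bm{v})$ with $\beta_{ij}(0)>0$, rigorously upgrading this local (small-$\bm{v}$) statement to the full set of solutions requires ruling out zeros of $\beta_{ij}(\bm{v})$ on the self-consistent branch; equivalently, that no ``accidental'' off-axis solution of the three equations $\tilde{S}_{ij}=0$ survives when higher-order Taylor terms are reinstated. A clean connectedness or analyticity argument using that $|\bm{v}|\le 1$ is likely the cleanest route.
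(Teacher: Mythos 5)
Your Parts 1, 3 and 4 track the paper's own proof quite closely: Part 1 is the two-fold symmetrization $P\mapsto PJ$ (the paper takes $J=\mbox{diag}(-1,1,-1)$, which equally preserves $W_1$ and flips $\bm{m}_1$) followed by the pointwise bound $x\tanh x<x^2$; Part 3 is the factorization $\pm R(P):R(P')$ plus sign-flip symmetrization once $\left<\bm{m}_1\right>=0$ is secured; for Part 4 the paper uses only the discrete rotation by $\pi/2$ about the axis, so your full $SO(2)$-invariance argument is a mild strengthening of the same idea (and handles the off-diagonal entries in one stroke, which the paper leaves implicit).

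The genuine gap is Part 2, and it is exactly the obstacle you name: the local factorization $\tilde S_{ij}(\bm{v})=v_iv_j\beta_{ij}(\bm{v})$ with $\beta_{ij}(0)>0$ says nothing about actual nontrivial solutions, which by Part 1 exist only when $-c_1>1$, i.e.\ away from the perturbative regime; neither connectedness nor analyticity of $\beta_{ij}$ will by itself rule out an off-axis zero. The paper's fix makes the positivity of $\beta_{ij}$ global rather than infinitesimal. Suppose $\left<\bm{m}_1\right>$ has two nonzero components $r_1=\left<m_{11}\right>$, $r_2=\left<m_{12}\right>$ in the frame diagonalizing both second moments, and apply to $\left<m_{11}m_{12}\right>$ the same four-fold symmetrization you already use in Part 3 (over $P$, $J_1PJ_3$, $J_2PJ_3$, $J_1J_2P$, which flip the signs of $m_{11}$ and $m_{12}$ independently while fixing $W_1$ and $m_{13}$) --- but now keeping the polar term in the exponent. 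The four exponentials combine into a product of hyperbolic sines:
\begin{equation*}
r_1r_2\left<m_{11}m_{12}\right>\;\propto\;\int\d\nu\;
\big(r_1m_{11}\sinh(-c_1r_1m_{11})\big)\,\big(r_2m_{12}\sinh(-c_1r_2m_{12})\big)\,
e^{-W_1(P)-c_1r_3m_{13}},
\end{equation*}
with a positive constant of proportionality. Since $\left<\bm{m}_1\right>\ne 0$ forces $-c_1>1>0$ by Part 1, each factor $x\sinh(-c_1x)$ is nonnegative and vanishes only at $x=0$, so the integrand is nonnegative and positive off a null set; the right-hand side is therefore strictly positive, contradicting $\left<m_{11}m_{12}\right>=0$ unless $r_1r_2=0$. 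This is precisely the statement $\beta_{12}(\bm{v})>0$ for every admissible $\bm{v}$, obtained without any expansion in $\bm{v}$, and it is the step your proposal is missing.
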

\proof
\begin{enumerate}
\item 
Set $J=\mbox{diag}(-1,1,-1)$. It is easy to verify that
$$
W_1(PJ)=W_1(P). 
$$
The self-consistent equation of $\bm{m}_1$ yields
\begin{eqnarray*}
\left<\bm{m}_{1}\right>
&=&\frac{2\int \d\nu \bm{m}_{1}
\exp\big(-W_1(P)-c_1\left<\bm{m}_{1}\right>\cdot\bm{m}_{1}\big)}
{2\int\d\nu\exp\big(-W_1(P)-c_1\left<\bm{m}_{1}\right>\cdot\bm{m}_{1}\big)}\\
&=&\frac{\int \d\nu \bm{m}_{1}\big[
\exp\big(-W_1(P)-c_1\left<\bm{m}_{1}\right>\cdot\bm{m}_{1}\big)
-\exp\big(-W_1(JP)+c_1\left<\bm{m}_{1}\right>\cdot\bm{m}_{1}\big)\big]}
{\int\d\nu\big{[}\exp(-W_1(P)-c_1\left<\bm{m}_{1}\right>\cdot\bm{m}_{1})
+\exp(-W_1(JP)+c_1\left<\bm{m}_{1}\right>\cdot\bm{m}_{1})\big{]}}\\
&=&\frac{\int \d\nu \bm{m}_{1}\exp\big(-W_1(P)\big)
\sinh\big(-c_1\left<\bm{m}_{1}\right>\cdot\bm{m}_{1}\big)}
{\int \d\nu \exp(-W_1(P))
\cosh\big(-c_1\left<\bm{m}_{1}\right>\cdot\bm{m}_{1}\big)}. 
\end{eqnarray*}
Therefore
\begin{equation}\label{m1m}
|\left<\bm{m}_{1}\right>|^2
=\frac{\int \d\nu \exp(-W_1(P))\left<\bm{m}_{1}\right>\cdot\bm{m}_{1}
\sinh\big(-c_1\left<\bm{m}_{1}\right>\cdot\bm{m}_{1}\big)}
{\int \d\nu \exp(-W_1(P))
\cosh\big(-c_1\left<\bm{m}_{1}\right>\cdot\bm{m}_{1}\big)}. 
\end{equation}
If $-c_1\le 0$, then 
$$
\left<\bm{m}_{1}\right>\cdot\bm{m}_{1}
\sinh(-c_1\left<\bm{m}_{1}\right>\cdot\bm{m}_{1})\le 0, 
$$
which yields $|\left<\bm{m}_{1}\right>|^2\le 0$. 
If $0<-c_1\le 1$, using $x\tanh(x)<x^2$ for $x\ne 0$, we get
\begin{eqnarray*}
&&\left<\bm{m}_{1}\right>\cdot\bm{m}_{1}
\sinh\big(-c_1\left<\bm{m}_{1}\right>\cdot\bm{m}_{1}\big)\\
&=& -c_1\left<\bm{m}_{1}\right>\cdot\bm{m}_{1}
\tanh\big(-c_1\left<\bm{m}_{1}\right>\cdot\bm{m}_{1}\big)
\cosh\big(-c_1\left<\bm{m}_{1}\right>\cdot\bm{m}_{1}\big)\\
&<&-c_1\big(\left<\bm{m}_{1}\right>\cdot\bm{m}_{1}\big)^2
\cosh\big(-c_1\left<\bm{m}_{1}\right>\cdot\bm{m}_{1}\big)\\
&\le& |\left<\bm{m}_{1}\right>|^2|\bm{m}_{1}|^2
\cosh\big(-c_1\left<\bm{m}_{1}\right>\cdot\bm{m}_{1}\big)\\
&=&|\left<\bm{m}_{1}\right>|^2
\cosh\big(-c_1\left<\bm{m}_{1}\right>\cdot\bm{m}_{1}\big).
\end{eqnarray*}
If $\left<\bm{m}_1\right>\ne 0$, 
we substitute the above inequality into (\ref{m1m}) and get 
$|\left<\bm{m}_{1}\right>|^2<|\left<\bm{m}_{1}\right>|^2$, which is a contradiction. 
\item
Select coordinate axes that diagonalize $\left<\bm{m}_1\bm{m}_1\right>$ and 
$\left<\bm{m}_2\bm{m}_2\right>$. Now $W_1(P)$ is of the form
$$
W_1(P)=\sum_{i=1,2,j=1,2,3} c_{ij}m_{ij}^2.
$$
Set 
\begin{equation}\label{Js}
J_1=\mbox{diag}(-1,1,1),\  J_2=\mbox{diag}(1,-1,1),\ J_3=\mbox{diag}(1,1,-1),
\end{equation}
the form of $W_1(P)$ indicates that
\begin{equation}\label{Ws}
W_1(P)=W_1(J_1PJ_3)=W_1(J_2PJ_3)=W_1(J_1J_2P).
\end{equation}
If $\left<\bm{m}_1\right>$ does not parallel to any one of the axes, at least 
two of its components are nonzero. 
Suppose $r_1=\left<m_{11}\right>\ne 0$, $r_2=\left<m_{12}\right>\ne 0$ and 
denote $r_3=\left<m_{13}\right>$. 
By part 1 of the current theorem, $-c_1>1$. 
Thus $x\sinh(-c_1x)>0$ for $x\ne 0$. 
Using the self-consistent equation of $\left<m_{11}m_{12}\right>$, we get
\begin{eqnarray*}
&&4r_1r_2\left<m_{11}m_{12}\right>\\
&=&\frac{4}{Z}\int \d\nu 
r_1r_2m_{11}m_{12}\exp\big{(}-W_1(P)-c_1(r_1m_{11}+r_2m_{12}+r_3m_{13})\big{)}\\
&=&\frac{1}{Z}\int \d\nu r_1r_2m_{11}m_{12}\exp(-c_1r_3m_{13})
\Big{[}\exp\big(-W_1(P)-c_1(r_1m_{11}+r_2m_{12})\big)\\
&&-\exp\big(-W_1(J_1PJ_3)-c_1(-r_1m_{11}+r_2m_{12})\big)\\
&&-\exp\big(-W_1(J_2PJ_3)-c_1(r_1m_{11}-r_2m_{12})\big)\\
&&+\exp\big(-W_1(J_1J_2P)-c_1(-r_1m_{11}-r_2m_{12})\big)\Big{]}\\
&=&\frac{1}{Z}\int \d\nu r_1r_2m_{11}m_{12}\exp(-W_1(P)-c_1r_3m_{13})
\sinh (-c_1r_1m_{11})\sinh(-c_2r_2m_{12})\\
&>&0.
\end{eqnarray*}
This inequality violates the diagonalization of $\left<\bm{m}_1\bm{m}_1\right>$.
\item
From $c_4^2=c_2c_3$, we can write
$$
c_2p_{11}^2+c_3p_{22}^2+c_4(p_{12}^2+p_{21}^2)=\pm 
(d_1\bm{m}_1\bm{m}_1+d_2\bm{m}_2\bm{m}_2):(d_1\bm{m'}_1\bm{m'}_1+d_2\bm{m'}_2\bm{m'}_2).
$$
Without loss of generality, the sign on the right side is assumed positive. 
Because $c_1\ge -1$, we get $\left<\bm{m}_1\right>=0$. Thereby $W$ converts into
\begin{eqnarray*}
W(P)=W_1(P)=\left<d_1\bm{m}_1\bm{m}_1+d_2\bm{m}_2\bm{m}_2\right>:(d_1\bm{m}_1\bm{m}_1+d_2\bm{m}_2\bm{m}_2). 
\end{eqnarray*}
This would enable us to select coordinate axes according to the axes of 
$\left<d_1\bm{m}_1\bm{m}_1\right.+\left.d_2\bm{m}_2\bm{m}_2\right>$. 
We show that 
$\left<\bm{m}_1\bm{m}_1\right>$ and $\left<\bm{m}_2\bm{m}_2\right>$ are 
diagonalized as well. In other words, we need to show that the off-diagonal 
elements are zero. Let $J_1,J_2,J_3$ be defined as in (\ref{Js}). 
Equation (\ref{Ws}) still holds for $W_1$. Therefore 
\begin{eqnarray*}
&&4\left<m_{11}m_{12}\right>\\
&=&\frac{4}{Z}\int \d\nu m_{11}m_{12}\exp(-W_1(P))\\
&=&\frac{1}{Z}\int \d\nu m_{11}m_{12}
\Big[\exp\big(-W_1(P)\big)-\exp\big(-W_1(J_1PJ_3)\big)\\
&&-\exp\big(-W_1(J_2PJ_3)\big)+\exp\big(-W_1(J_1J_2P)\big)\Big]\\
&=&0. 
\end{eqnarray*}
The zero values of other off-diangonal elements can be obtained similarly. 
\item
First we choose axes that diagonalize $\left<d_1\bm{m}_1\bm{m}_1
+d_2\bm{m}_2\bm{m}_2\right>$ with diagonal elements $b_1,b_2,b_3$. 
The uniaxiality requires that two of them are equal. Assume $b_1=b_2$, 
then $\left<m_{11}\right>=\left<m_{12}\right>=0$. 
Thereby $W$ is simplified to
\begin{eqnarray*}
W(P)&=&d_1(b_1(m_{11}^2+m_{12}^2)+b_3m_{13}^2)+d_2(b_1(m_{21}^2+m_{22}^2)+b_3m_{23}^2)+r_{13}m_{13}\\
&=&(d_1+d_2)b_1+d_1(b_3-b_1)m_{13}^2+d_2(b_3-b_1)m_{23}^2+r_{13}m_{13}
\end{eqnarray*}
where $r_{13}=\left<m_{13}\right>$. Set 
$$
J=\left(
\begin{array}{ccc}
  0&\ -1&\ 0\\
  1&\ 0&\ 0\\
  0&\ 0&\ 1
\end{array}
\right). 
$$
Using $W(P)=W(JP)$, we get
\begin{eqnarray*}
\left<m_{12}^2\right>&=&\frac{1}{Z}\int \d\nu m_{12}^2\exp\big(-W(P)\big)\\
&=&\frac{1}{Z}\int \d\nu m_{11}^2\exp\big(-W(JP)\big)\\
&=&\frac{1}{Z}\int \d\nu m_{11}^2\exp\big(-W(P)\big)\\
&=&\left<m_{11}^2\right>. 
\end{eqnarray*}
\end{enumerate}
\qed

We tend to believe that $\left<\bm{m}_1\bm{m}_1\right>$ and 
$\left<\bm{m}_2\bm{m}_2\right>$ can be diagonalized simutaneously. 
The results of Theorem \ref{MoB} would reduce the degrees of freedom 
of order parameters of bent-core molecules to $5$. We choose coordinate axes as 
eigenvectors of $\left<\bm{m}_1\bm{m}_1\right>$ as well as those of 
$\left<\bm{m}_2\bm{m}_2\right>$, then $\left<\bm{m}_1\right>$ parallels to one 
of the eigenvectors of $\left<\bm{m}_1\bm{m}_1\right>$. Because the trace of 
$\left<\bm{m}_1\bm{m}_1\right>$ and $\left<\bm{m}_2\bm{m}_2\right>$ equal to 
$1$, both of the two second moments contribute two degrees of freedom. At last 
$\left<\bm{m}_1\right>$ contributes one, making them five in total. 

Finally, we should point out that the set of order parameters should be decided 
by results of experiments and simulations so as to be able to distinguish 
dirrerent phases. It should also follow this criterion to determine where to 
truncate the polynomial approximation of $G$. 
Rod-like molecules exhibits only uniaxial nematics. 
As we have described, Maier-Saupe potential is a polynomial approximation of 
$G$ truncated on the second order. With thorough analysis the number of order 
parameter is reduced to 1. Therefore Maier-Saupe potential is proven to be the 
most concise model of rod-like molecules that covers experimental results. 
Up to now, spatially homogeneous phases of bent-core molecule are restrained to 
uniaxial or biaxial nematics, without the observation of polar order. 
This seems to indicate the sufficiency to approximate $G$ with quadratic 
polynomials, which contradicts with what is proposed in \cite{OrdPar}. Also it 
will be interesting to see if any phases with polar order would appear. 

\section{Conclusion and outlook}\label{Con}
A generic modelling procedure is proposed for rigid molecules of arbitrary 
shape. The modelling of kernel function incorporates pairwise interaction. 
We show that the symmetries of molecule determine the reduced configuration 
space and the form of polynomial approximations of $G$ with its coefficients 
depended on temperature and molecular parameters. 
An approximate kernel is deduced for molecules with $C_{2v}$ symmetry. 
By approximating $G$ with polynomial, the system is reduced to a 
group of equations about moments of body-fixed axes. Some properties of these 
moments are studied for molecules with $C_{2v}$ symmetry, 
and the number of order parameters is predicted for bent-core molecules. 
The prediction needs to be verified by results of simulations and comparison to
experiments. Moreover, it remains unknown whether there are some general 
relationships between the moments. 
A clear understanding of them would help us to find out a
minimal complete set of order parameters. 

\section{Appendix}
\subsection{The excluded-volume potential of spherotriangles}
\subsubsection{The calculation of excluded volume}
\begin{figure}
\centering
\includegraphics[width=0.8\textwidth,keepaspectratio]{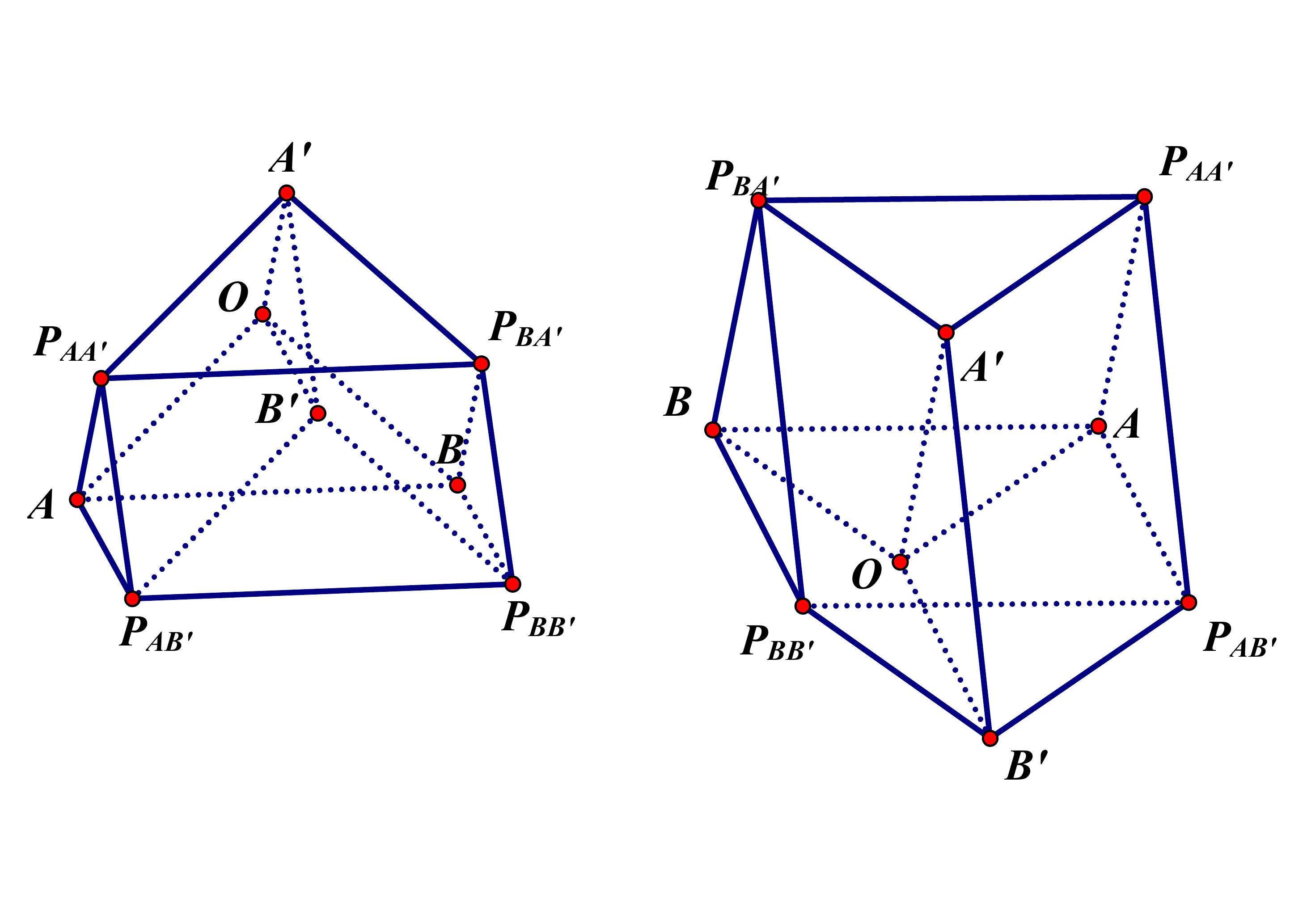}
\caption{Two possible geometries of $K$}\label{tri}
\end{figure}
Here we calculate the excluded-volume potential of two 
spherotriangles $T_1+B_{D/2}$ and $T_2+B_{D/2}$. 
The excluded region can be represented by $K+B_D$ where $K=T_1-T_2$. 
By (\ref{Steiner}), we need to calculate $V_3,\ V_2,\ V_1$ for $K$. 
Denote the vertices of $T_1$ as $OAB$ that lie in plane $\pi$, 
and those of $-T_2$ as $O'A'B'$ that lie in plane $\pi'$. 
$K$ is a polytope, for it is the convex hull of nine points
$$
\{O,A,B\}+\{O',A',B'\}. 
$$
The edges of two triangles are denoted as
$$
\overrightarrow{AO}=\bm{a},\ \overrightarrow{OB}=\bm{b},\ \overrightarrow{BA}=\bm{c},\ 
\overrightarrow{A'O'}=\bm{a'},\ \overrightarrow{O'B'}=\bm{b'},\ \overrightarrow{B'A'}=\bm{c'}.
$$

If $\pi$ and $\pi'$ do not parallel, we can label the vertices properly such 
that the plane $\pi+O'-O$ seperates $A'$ and $B'$, and the plane $\pi'+O-O'$ 
seperates $A$ and $B$, namely 
\begin{eqnarray*}
  (\bm{m}_3\cdot\bm{a'})(\bm{m}_3\cdot\bm{b'})\ge 0,\\
  (\bm{m'}_3\cdot\bm{a})(\bm{m'}_3\cdot\bm{b})\ge 0.
\end{eqnarray*}
If the intersection of two triangles $T_1$ and $T_2+O-O'$ is not empty, which 
indicates
$$
(\bm{m}_3\cdot\bm{c'})(\bm{m'}_3\cdot\bm{c})<0,
$$
$K$ is drawn in the left part of Fig.\ref{tri}; 
and if it is empty, which indicates
$$
(\bm{m}_3\cdot\bm{c'})(\bm{m'}_3\cdot\bm{c})>0,
$$
Now we may assume that $O=O'$, then $K$ is drawn in the right part of 
Fig.\ref{tri}. The notion $P_{AA'}$ represents the point located at 
$O+\overrightarrow{OA}+\overrightarrow{O'A'}$, etc..
When $\pi$ and $\pi'$ are parallel, we can label the vertices such that 
$T_2$ intersects with $\angle AOB$ or its vertical angle. 

First we calculate $V_3(K)$. For the case on the left part of Fig.\ref{tri}, 
$K$ can be divided into the prisms
$$
AP_{AA'}P_{AB'}-BP_{BA'}P_{BB'},\ A'P_{AA'}P_{BA'}-OAB,\ OAB-B'P_{AB'}P_{BB'}, 
$$
or
$$
A'P_{AA'}P_{BA'}-B'P_{AB'}P_{BB'},\ AP_{AA'}P_{AB'}-OA'B',\ OA'B'-BP_{BA'}P_{BB'}.
$$
Thus
$$
V_3=|\bm{a}\times\bm{b}\cdot\bm{a'}|+|\bm{a}\times\bm{b}\cdot\bm{b'}|
+|\bm{a'}\times\bm{b'}\cdot\bm{c}|
=|\bm{a}\times\bm{b}\cdot\bm{c'}|+|\bm{a'}\times\bm{b'}\cdot\bm{a}|
+|\bm{a'}\times\bm{b'}\cdot\bm{b}|.
$$
For the case on the right, $K$ can be divided into the prisms
$$
AP_{AA'}P_{AB'}-BP_{BA'}P_{BB'},\ A'P_{AA'}P_{BA'}-B'P_{AB'}P_{BB'},
$$
or 
$$
AP_{AA'}P_{AB'}-OA'B',\ A'P_{AA'}P_{BA'}-OAB,\ OAB-B'P_{AB'}P_{BB'},\ OA'B'-BP_{BA'}P_{BB'}. 
$$
Thus 
$$
V_3=|\bm{a}\times\bm{b}\cdot\bm{a'}|+|\bm{a}\times\bm{b}\cdot\bm{b'}|
+|\bm{a'}\times\bm{b'}\cdot\bm{a}|+|\bm{a'}\times\bm{b'}\cdot\bm{b}|
=|\bm{a}\times\bm{b}\cdot\bm{c'}|+|\bm{a'}\times\bm{b'}\cdot\bm{c}|.
$$
For both cases, we have
\begin{equation}
V_3(K)=
\frac{1}{2}\Big(|\bm{a}\times\bm{b}\cdot\bm{a'}|+|\bm{a}\times\bm{b}\cdot\bm{b'}|
+|\bm{a}\times\bm{b}\cdot\bm{c'}|+|\bm{a'}\times\bm{b'}\cdot\bm{a}|
+|\bm{a'}\times\bm{b'}\cdot\bm{b}|+|\bm{a'}\times\bm{b'}\cdot\bm{c}|\Big).
\end{equation}

Next we calculate $V_1(K)$. Each edge of $K$ parallels to one of the six edges 
of $T_1$ and $T_2$. As an example, we describe the contribution to $V_1$ of 
edges parallel to $\bm{a}$. As the faces contain one of those edges, the 
outward normal vectors lie in a plane perpendicular to $\bm{a}$. 
For the case on the left, there are three edges parallel to $\bm{a}$: 
$$
A'P_{AA'},\ OA,\ B'P_{AB'}. 
$$
As their length equals to $|\bm{a}|$, we only need to calculate the sum of 
external angles, which is
$$
\frac{1}{2\pi}(\angle \left<\bm{n}_{A'P_{AA'}P_{BA'}},\bm{n}_{OAP_{AA'}A'}\right>
+\angle \left<\bm{n}_{OAP_{AA'}A'},\bm{n}_{OAP_{AB'}B'}\right>
+\angle \left<\bm{n}_{OAP_{AB'}B'},\bm{n}_{B'P_{AB'}P_{BB'}}\right>). 
$$
Note that $\bm{n}_{A'P_{AA'}P_{BA'}}$ and $\bm{n}_{B'P_{AB'}P_{BB'}}$ are reverse, and 
the four vectors 
$$
\bm{n}_{A'P_{AA'}P_{BA'}},\ \bm{n}_{OAP_{AA'}A'},\ 
\bm{n}_{OAP_{AB'}B'},\ \bm{n}_{B'P_{AB'}P_{BB'}}
$$
are sequentially arranged. Thus the three angles add up to $\pi$, 
and the sum of the external angles equals to $\frac{1}{2}$. 
For the case on the right, there are two edges parallel to $\bm{a}$: 
$$
A'P_{AA'},\ B'P_{AB'}.
$$
Again we only need the sum of the external angles: 
$$
\frac{1}{2\pi}
(\angle \left<\bm{n}_{A'P_{AA'}P_{BA'}},\bm{n}_{A'P_{AA'}P_{AB'}B'}\right>
+\angle \left<\bm{n}_{A'P_{AA'}P_{AB'}B'},\bm{n}_{B'P_{AB'}P_{BB'}}\right>)
=\frac{1}{2}.
$$
Therefore the amount of the external angles at the edges parallel to $\bm{a}$ is always 
$\frac{1}{2}$. The above calculation can be done for the other five edges, leading to 
\begin{equation}
V_1(K)=\frac{1}{2}\Big(|\bm{a}|+|\bm{b}|+|\bm{c}|+|\bm{a'}|+|\bm{b'}|+|\bm{c'}|\Big).
\end{equation}

The expression of $V_2(K)$ is different for two cases in Fig.\ref{tri}. 
The faces of $K$ always contain four triangles $\triangle AP_{AA'}P_{AB'},
\triangle BP_{BA'}P_{BB'}, \triangle A'P_{AA'}P_{BA'}$ and $\triangle B'P_{AB'}P_{BB'}$.
The other faces are some parallelograms. For the case in the left, they are
$$
OAP_{AA'}A',\ OBP_{BA'}A',\ OAP_{AB'}B',\ OBP_{BB'}B',\ P_{AA'}P_{AB'}P_{BB'}P_{BA'}.
$$
Thus
$$
V_2(K)=|\bm{a}\times \bm{b}|+|\bm{a'}\times \bm{b'}|
+|\bm{a}\times \bm{a'}|+|\bm{a}\times \bm{b'}|+|\bm{b}\times \bm{a'}|
+|\bm{b}\times \bm{b'}|+|\bm{c}\times \bm{c'}|.
$$
For the case in the right, they are
$$
ABP_{BA'}P_{AA'},\ ABP_{BB'}P_{AB'},\ A'B'P_{AB'}P_{AA'},\ A'B'P_{BB'}P_{BA'}.
$$
Thus
$$
V_2(K)=|\bm{a}\times \bm{b}|+|\bm{a'}\times \bm{b'}|
+|\bm{c}\times \bm{a'}|+|\bm{c}\times \bm{b'}|
+|\bm{a}\times \bm{c'}|+|\bm{b}\times \bm{c'}|.
$$
We point out that 
\begin{equation}\label{PntRef}
V_2(T_1-T_2)+V_2(T_1+T_2)=
\sum_{\bm{e}\in\{\bm{a,b,c}\},\bm{e'}\in\{\bm{a',b',c'}\}}|\bm{e}\times\bm{e'}|
+2\big(|\bm{a}\times\bm{b}|+|\bm{a'}\times\bm{b'}|\big). 
\end{equation}
In fact, when $T_2$ is substituted with $-T_2$, 
$\bm{a'},\bm{b'},\bm{c'}$ convert into $-\bm{a'},-\bm{b'},-\bm{c'}$. 
So $(\bm{m}_3\cdot\bm{a'})(\bm{m}_3\cdot\bm{b'})$ and 
$(\bm{m'}_3\cdot\bm{a})(\bm{m'}_3\cdot\bm{b})$ remain unchanged, while 
$(\bm{m}_3\cdot\bm{c'})(\bm{m'}_3\cdot\bm{c})$ alters its sign. 
This means that one of $T_1-T_2$ and $T_1+T_2$ corresponds to the case in the 
left, while the other corresponds to the case in the right. Therefore 
(\ref{PntRef}) holds. 

The excluded volume of rods could be obtained for congruent
$\triangle OAB,\ \triangle O'A'B'$ with $\angle AOB=\pi$. 
In this case, $\bm{c}=L\bm{m}$, $V_3=0$ and $V_1=2L$. 
$$
V_2=|\bm{c}\times \bm{a'}|+|\bm{c}\times \bm{b'}|
+|\bm{a}\times \bm{c'}|+|\bm{b}\times \bm{c'}|
=|\bm{c}\times (\bm{a'}-\bm{b'})|
+|(\bm{a}-\bm{b})\times \bm{c'}|
=2|\bm{c}\times \bm{c'}|.
$$
Hence
$$
V=2L^2D|\bm{m}\times \bm{m'}|+2\pi LD^2+\frac{4}{3}\pi D^3
$$
which is a constant different from Onsager's form. 

\subsubsection{Quadratic projection of the excluded-volume potential}
The above derivation for excluded volume is valid for any pair of triangles. 
Now we suppose that $T$ is isoceles with top corner $\theta$ and length of 
lateral sides $L/2$. Two triangles are given by $T_1=PT$ and $T_2=P'T$. 
The unit vectors along the edges of two triangles are written as follows. 
\begin{eqnarray*}
  \bm{e}_a=\frac{\bm{a}}{|\bm{a}|}=\bm{m}_1\cos\frac{\theta}{2}+\bm{m}_2\sin\frac{\theta}{2},\ 
  \bm{e}_b=\frac{\bm{b}}{|\bm{b}|}=-\bm{m}_1\cos\frac{\theta}{2}+\bm{m}_2\sin\frac{\theta}{2},\ 
  \bm{e}_c=\frac{\bm{c}}{|\bm{c}|}=-\bm{m}_2,\\
  \bm{e'}_a=\frac{\bm{a'}}{|\bm{a'}|}=\bm{m'}_1\cos\frac{\theta}{2}+\bm{m'}_2\sin\frac{\theta}{2},\ 
  \bm{e'}_b=\frac{\bm{b'}}{|\bm{b'}|}=-\bm{m'}_1\cos\frac{\theta}{2}+\bm{m'}_2\sin\frac{\theta}{2},\ 
  \bm{e'}_c=\frac{\bm{c'}}{|\bm{c'}|}=-\bm{m'}_2
\end{eqnarray*}
with $|\bm{a}|=|\bm{b}|=|\bm{a'}|=|\bm{b'}|=L/2$ 
and $|\bm{c}|=|\bm{c'}|=L\sin\frac{\theta}{2}$. 
We aim to project $V$ onto the space spanned by
$$
Q=\{1, p_{11}, p_{11}^2, p_{12}^2,p_{21}^2, p_{22}^2\}. 
$$
Note that the following functions in span$\{Q\}$ are mutually orthogonal: 
$$
1,p_{11}, \frac{1}{2}(3p_{11}^2-1), \sqrt{3}(p_{12}^2+\frac{1}{2}(p_{11}^2-1)),
\sqrt{3}(p_{21}^2+\frac{1}{2}(p_{11}^2-1)), 
2p_{22}^2+(p_{12}^2+p_{21}^2)+\frac{1}{2}p_{11}^2-\frac{3}{2}. 
$$
We focus on the even-order terms first. Let
\begin{eqnarray*}
  k_0&=&\int \d\nu(\bar{P})V(\bar{P}),\\
  k_1&=&\int \d\nu(\bar{P})V(\bar{P})p_{11}^2,\\
  k_2&=&\int \d\nu(\bar{P})V(\bar{P})p_{12}^2=\int \d\nu(\bar{P})V(\bar{P})p_{21}^2,\\
  k_3&=&\int \d\nu(\bar{P})V(\bar{P})p_{22}^2. 
\end{eqnarray*}
The even-order part of projection will be written as
\begin{align*}
  5&\left[(\frac{3}{2}k_1-\frac{1}{2}k_0)(\frac{3}{2}p_{11}^2-\frac{1}{2})
    +3(k_2+\frac{1}{2}(k_1-k_0))(p_{12}^2+p_{21}^2+p_{11}^2-1)\right.\\
    &\left.+4(k_3+k_2+\frac{1}{4}k_1-\frac{3}{4}k_0)(p_{22}^2
    +\frac{1}{2}(p_{12}^2+p^2_{21})+\frac{1}{4}p_{11}^2-\frac{3}{4}) \right]. 
\end{align*}
By comparing the coefficients, we have
\begin{eqnarray}
  c_2&=&5(4k_1+4k_2+k_3-3k_0),\label{cc2}\\
  c_3&=&5(k_1+4k_2+4k_3-3k_0), \label{cc3}\\
  c_4&=&5(2k_1+5k_2+2k_3-3k_0).\label{cc4}
\end{eqnarray}

In the above, $k_0,k_1,k_2,k_3$ can be evaluated analytically. We use the 
notation $p_{ij}(\bar{P})$ to represent the $(i,j)$ element of $\bar{P}$.
First we point out that 
\begin{eqnarray}
&&\int_{SO_3}\d\nu(\bar{P})V_2(\bar{P})p^2_{ij}(\bar{P})\nonumber\\
&=&\int_{SO_3}\d\nu(\bar{P})p_{ij}^2(\bar{P})
\left(|\bm{a}\times\bm{b}|+|\bm{a'}\times\bm{b'}|+
\sum_{\bm{e}\in\{\bm{a,b,c}\},\bm{e'}\in\{\bm{a',b',c'}\}}
\frac{1}{2}|\bm{e}\times\bm{e'}|\right).\label{cent}
\end{eqnarray}
In fact, $V_2(\bar{P})=V_2(PT-P'T)$. 
Let $J=\mbox{diag}(-1,-1,1)$, then $JT=-T$. 
Thereby
$$V_2(\bar{P}J)=V_2(PT-P'JT)=V_2(PT+P'T).$$ 
By (\ref{PntRef}) we have
$$
V_2(\bar{P})+V_2(\bar{P}J)=
2\big(|\bm{a}\times\bm{b}|+|\bm{a'}\times\bm{b'}|\big)+
\sum_{\bm{e}\in\{\bm{a,b,c}\},\bm{e'}\in\{\bm{a',b',c'}\}}|\bm{e}\times\bm{e'}|. 
$$
Meanwhile $p_{ij}(\bar{P}J)=p_{ij}(\bar{P})$, therefore (\ref{cent}) holds. 
We need to calculate the terms like
\begin{equation}\label{cos}
\int_{SO_3}\d\nu(\bar{P})p_{ij}^2|\bm{a}\times\bm{b}\cdot\bm{a'}|
=\frac{1}{8}L^3\sin\theta\int_{SO_3}\d\nu(\bar{P})p_{ij}^2|\bm{m}_3\cdot
\bm{e'}_a|
\end{equation}
and
\begin{equation}\label{sin}
\int_{SO_3}\d\nu(\bar{P})p_{ij}^2|\bm{a}\times\bm{a'}|
=\frac{1}{4}L^2\int_{SO_3}\d\nu(\bar{P})p_{ij}^2
|\bm{e}_a\times\bm{e'}_a|. 
\end{equation}

We describe the strategy to compute integrals
$$
\int_{SO_3}\d\nu(\bar{P})p_{ij}^2|\bm{e}\times\bm{e'}|,\quad 
\int_{SO_3}\d\nu(\bar{P})p_{ij}^2|\bm{e}\cdot\bm{e'}|, 
$$
where $\bm{e},\bm{e'}$ are unit vectors. The following formula is needed. 
\begin{equation}
  \int_{SO_3}\d\nu(\bar{P})f(\bar{P})=\int_{SO_3}\d\nu(\bar{P})f(R_1^{-1}\bar{P}R_2), 
  \quad \forall R_1,R_2\in SO_3.
\end{equation}
Choose $R_1$ and $R_2$ such that
$$
R_1\bm{e}=\bm{m}_1,\quad R_2\bm{e'}=\bm{m'}_1. 
$$
The integral is rewritten as 
\begin{eqnarray*}
\int_{SO_3}\d\nu(\bar{P})p_{ij}^2|\bm{e}\times\bm{e'}|&=&
\int_{SO_3}\d\nu(\bar{P})p^2_{ij}(R_1^{-1}\bar{P}R_2)|\bm{m}_1\times\bm{m'}_1|\\
&=&\int^{\pi}_0 \d\alpha\int^{2\pi}_0\d\beta\int^{2\pi}_0\d\gamma
\frac{\sin\alpha}{8\pi^2} |\sin\alpha|Q(\alpha,\beta,\gamma), 
\end{eqnarray*}
in which $Q$ is a trigonometric polynomial of $\alpha,\beta,\gamma$. 
When the cross product is replaced by dot product, $|\sin\alpha|$ is 
substituted with $|\cos\alpha|$. 
We compute (\ref{cos}) as an example. Define $R_1$ and $R_2$ by
\begin{align*}
  R_1\bm{m}_1&=-\bm{m}_3,&R_1\bm{m}_2&=\bm{m}_2,&R_1\bm{m}_3&=\bm{m}_1,\\
  R_2\bm{m'}_1&=\bm{m'}_1\cos\frac{\theta}{2}-\bm{m'}_2\sin\frac{\theta}{2},&
  R_2\bm{m'}_2&=\bm{m'}_2\cos\frac{\theta}{2}+\bm{m'}_1\sin\frac{\theta}{2},&
  R_2\bm{m'}_3&=\bm{m'}_3.
\end{align*}
Then we have
$$
p_{11}(R_1^{-1}\bar{P}R_2)=-p_{31}\cos\frac{\theta}{2}+p_{32}\sin\frac{\theta}{2}.
$$
Hence 
\begin{eqnarray*}
&&\int_{SO_3}\d\nu(\bar{P})p_{11}^2|\bm{m}_3\cdot
(\bm{m'}_1\cos\frac{\theta}{2}+\bm{m'}_2\sin\frac{\theta}{2})|\\
&=&\int_{SO_3}\d\nu(\bar{P})(-p_{31}\cos\frac{\theta}{2}+p_{32}
\sin\frac{\theta}{2})^2|\bm{m}_1\cdot\bm{m'}_1|\\
&=&\int_0^{\pi}\d\alpha\int_0^{2\pi}\d\beta\int_{0}^{2\pi}
\d\gamma\frac{\sin\alpha}{8\pi^2}|\cos\alpha|\\
&&\big(-\sin\alpha\sin\beta\cos\frac{\theta}{2}+\sin\frac{\theta}{2}
(\cos\alpha\sin\beta\cos\gamma+\cos\beta\sin\gamma)\big)^2\\
&=&\frac{1}{8}\cos^2\frac{\theta}{2}+\frac{3}{16}\sin^2\frac{\theta}{2}.
\end{eqnarray*}
The other terms could be handled similarly. All the results are listed in 
Table.\ref{projcoe} at the end of the article. 
By collecting those results, we get
\begin{eqnarray*}
  2k_0&=&\frac{1}{4}cL^3\sin\theta(1+\sin\frac{\theta}{2})
  +\frac{\pi}{4}cL^2D(1+2\sin\frac{\theta}{2}+\sin^2\frac{\theta}{2})+3C,\\
  2k_1&=&\frac{1}{32}cL^3\sin\theta(2\cos^2\frac{\theta}{2}
  +3\sin^2\frac{\theta}{2}+3\sin\frac{\theta}{2})
  +\frac{\pi}{64}cL^2D\Big[4\cos^4\frac{\theta}{2}+5\sin^4\frac{\theta}{2}\\
  &&+12\sin^2\frac{\theta}{2}\cos^2\frac{\theta}{2}+2\sin\frac{\theta}{2}
  (6\cos^2\frac{\theta}{2}+5\sin^2\frac{\theta}{2})+5\sin^2\frac{\theta}{2}\Big]+C,\\
  2k_2&=&\frac{1}{64}cL^3\sin\theta(5+5\sin\frac{\theta}{2})
  +\frac{\pi}{64}cL^2D\Big[6\cos^4\frac{\theta}{2}+6\sin^4\frac{\theta}{2}\\
  &&+9\sin^2\frac{\theta}{2}\cos^2\frac{\theta}{2}+\sin\frac{\theta}{2}
  (9\cos^2\frac{\theta}{2}+12\sin^2\frac{\theta}{2})+6\sin^2\frac{\theta}{2}\Big]+C,\\
  2k_3&=&\frac{1}{32}cL^3\sin\theta(3\cos^2\frac{\theta}{2}
  +2\sin^2\frac{\theta}{2}+2\sin\frac{\theta}{2})
  +\frac{\pi}{64}cL^2D\Big[5\cos^4\frac{\theta}{2}+4\sin^4\frac{\theta}{2}\\
  &&+12\sin^2\frac{\theta}{2}\cos^2\frac{\theta}{2}+2\sin\frac{\theta}{2}
  (6\cos^2\frac{\theta}{2}+4\sin^2\frac{\theta}{2})+4\sin^2\frac{\theta}{2}\Big]+C, 
\end{eqnarray*}
where $C$ is a constant
$$
3C=\frac{1}{4}L^2D\sin\theta+D^2L(1+\sin\frac{\theta}{2})+\frac{4}{3}\pi D^3.
$$
By (\ref{cc2})-(\ref{cc4}), we get (\ref{c_2})-(\ref{c_4}). 

The computation of $c_1$ is complicated. Note that $V_3$ does not contribute to 
$c_1$. In fact, it is obvious that $V_3(\bar{P}J)=V_3(\bar{P})$ and 
$p_{11}(\bar{P}J)=-p_{11}(\bar{P})$, which yield
$$
\int_{SO_3}\d\nu(\bar{P})V_3p_{11}=0.
$$
Therefore
$$
c_1=\frac{3}{8}cDL^2K(\theta). 
$$
Denote
\begin{align*}
  I_{aa}&=|\bm{e}_a\times\bm{e'}_a|, &
  I_{ab}&=|\bm{e}_a\times\bm{e'}_b|, &
  I_{ac}&=2\sin\frac{\theta}{2}|\bm{e}_a\times\bm{e'}_c|, \\
  I_{ba}&=|\bm{e}_b\times\bm{e'}_a|, &
  I_{bb}&=|\bm{e}_b\times\bm{e'}_b|, &
  I_{bc}&=2\sin\frac{\theta}{2}|\bm{e}_b\times\bm{e'}_c|, \\
  I_{ca}&=2\sin\frac{\theta}{2}|\bm{e}_c\times\bm{e'}_a|, &
  I_{cb}&=2\sin\frac{\theta}{2}|\bm{e}_c\times\bm{e'}_b|, &
  I_{cc}&=4\sin^2\frac{\theta}{2}|\bm{e}_c\times\bm{e'}_c|.
\end{align*}
$K(\theta)$ is written as
\begin{equation}
  \begin{split}
  K(\theta)&=\int_{SO_3}\d\nu(\bar{P})p_{11}\\
&\left[(I_{aa}+I_{ab}+I_{ba}+I_{bb}+I_{cc})
 \chi_{\{ (\bm{m}_3\cdot\bm{a'})(\bm{m}_3\cdot\bm{b'})> 0,
  (\bm{m'}_3\cdot\bm{a})(\bm{m'}_3\cdot\bm{b})> 0,
 (\bm{m}_3\cdot\bm{c'})(\bm{m'}_3\cdot\bm{c})<0\}}\right.
\\
&(I_{ac}+I_{bc}+I_{ca}+I_{cb})
 \chi_{\{ (\bm{m}_3\cdot\bm{a'})(\bm{m}_3\cdot\bm{b'})> 0,
  (\bm{m'}_3\cdot\bm{a})(\bm{m'}_3\cdot\bm{b})> 0,
 (\bm{m}_3\cdot\bm{c'})(\bm{m'}_3\cdot\bm{c})>0\}}
\\
&(I_{ab}+I_{ac}+I_{bb}+I_{bc}+I_{ca})
 \chi_{\{ (\bm{m}_3\cdot\bm{b'})(\bm{m}_3\cdot\bm{c'})> 0,
  (\bm{m'}_3\cdot\bm{a})(\bm{m'}_3\cdot\bm{b})> 0,
 (\bm{m}_3\cdot\bm{a'})(\bm{m'}_3\cdot\bm{c})<0\}}
\\
&(I_{aa}+I_{ba}+I_{cb}+I_{cc})
 \chi_{\{ (\bm{m}_3\cdot\bm{b'})(\bm{m}_3\cdot\bm{c'})> 0,
  (\bm{m'}_3\cdot\bm{a})(\bm{m'}_3\cdot\bm{b})> 0,
 (\bm{m}_3\cdot\bm{a'})(\bm{m'}_3\cdot\bm{c})>0\}}
\\
&(I_{ac}+I_{aa}+I_{bc}+I_{ba}+I_{cb})
 \chi_{\{ (\bm{m}_3\cdot\bm{c'})(\bm{m}_3\cdot\bm{a'})> 0,
  (\bm{m'}_3\cdot\bm{a})(\bm{m'}_3\cdot\bm{b})> 0,
 (\bm{m}_3\cdot\bm{b'})(\bm{m'}_3\cdot\bm{c})<0\}}
\\
&(I_{ab}+I_{bb}+I_{cc}+I_{ca})
 \chi_{\{ (\bm{m}_3\cdot\bm{c'})(\bm{m}_3\cdot\bm{a'})> 0,
  (\bm{m'}_3\cdot\bm{a})(\bm{m'}_3\cdot\bm{b})> 0,
 (\bm{m}_3\cdot\bm{b'})(\bm{m'}_3\cdot\bm{c})>0\}}
\\
&(I_{ba}+I_{bb}+I_{ca}+I_{cb}+I_{ac})
 \chi_{\{ (\bm{m}_3\cdot\bm{a'})(\bm{m}_3\cdot\bm{b'})> 0,
  (\bm{m'}_3\cdot\bm{b})(\bm{m'}_3\cdot\bm{c})> 0,
 (\bm{m}_3\cdot\bm{c'})(\bm{m'}_3\cdot\bm{a})<0\}}
\\
&(I_{bc}+I_{cc}+I_{aa}+I_{ab})
 \chi_{\{ (\bm{m}_3\cdot\bm{a'})(\bm{m}_3\cdot\bm{b'})> 0,
  (\bm{m'}_3\cdot\bm{b})(\bm{m'}_3\cdot\bm{c})> 0,
 (\bm{m}_3\cdot\bm{c'})(\bm{m'}_3\cdot\bm{a})>0\}}
\\
&(I_{bb}+I_{bc}+I_{cb}+I_{cc}+I_{aa})
 \chi_{\{ (\bm{m}_3\cdot\bm{b'})(\bm{m}_3\cdot\bm{c'})> 0,
  (\bm{m'}_3\cdot\bm{b})(\bm{m'}_3\cdot\bm{c})> 0,
 (\bm{m}_3\cdot\bm{a'})(\bm{m'}_3\cdot\bm{a})<0\}}
\\
&(I_{ba}+I_{ca}+I_{ab}+I_{ac})
 \chi_{\{ (\bm{m}_3\cdot\bm{b'})(\bm{m}_3\cdot\bm{c'})> 0,
  (\bm{m'}_3\cdot\bm{b})(\bm{m'}_3\cdot\bm{c})> 0,
 (\bm{m}_3\cdot\bm{a'})(\bm{m'}_3\cdot\bm{a})>0\}}
\\
&(I_{bc}+I_{ba}+I_{cc}+I_{ca}+I_{ab})
 \chi_{\{ (\bm{m}_3\cdot\bm{c'})(\bm{m}_3\cdot\bm{a'})> 0,
  (\bm{m'}_3\cdot\bm{b})(\bm{m'}_3\cdot\bm{c})> 0,
 (\bm{m}_3\cdot\bm{b'})(\bm{m'}_3\cdot\bm{a})<0\}}
\\
&(I_{bb}+I_{cb}+I_{ac}+I_{aa})
 \chi_{\{ (\bm{m}_3\cdot\bm{c'})(\bm{m}_3\cdot\bm{a'})> 0,
  (\bm{m'}_3\cdot\bm{b})(\bm{m'}_3\cdot\bm{c})> 0,
 (\bm{m}_3\cdot\bm{b'})(\bm{m'}_3\cdot\bm{a})>0\}}
\\
&(I_{ca}+I_{cb}+I_{aa}+I_{ab}+I_{bc})
 \chi_{\{ (\bm{m}_3\cdot\bm{a'})(\bm{m}_3\cdot\bm{b'})> 0,
  (\bm{m'}_3\cdot\bm{c})(\bm{m'}_3\cdot\bm{a})> 0,
 (\bm{m}_3\cdot\bm{c'})(\bm{m'}_3\cdot\bm{b})<0\}}
\\
&(I_{cc}+I_{ac}+I_{ba}+I_{bb})
 \chi_{\{ (\bm{m}_3\cdot\bm{a'})(\bm{m}_3\cdot\bm{b'})> 0,
  (\bm{m'}_3\cdot\bm{c})(\bm{m'}_3\cdot\bm{a})> 0,
 (\bm{m}_3\cdot\bm{c'})(\bm{m'}_3\cdot\bm{b})>0\}}
\\
&(I_{cb}+I_{cc}+I_{ab}+I_{ac}+I_{ba})
 \chi_{\{ (\bm{m}_3\cdot\bm{b'})(\bm{m}_3\cdot\bm{c'})> 0,
  (\bm{m'}_3\cdot\bm{c})(\bm{m'}_3\cdot\bm{a})> 0,
 (\bm{m}_3\cdot\bm{a'})(\bm{m'}_3\cdot\bm{b})<0\}}
\\
&(I_{ca}+I_{aa}+I_{bb}+I_{bc})
 \chi_{\{ (\bm{m}_3\cdot\bm{b'})(\bm{m}_3\cdot\bm{c'})> 0,
  (\bm{m'}_3\cdot\bm{c})(\bm{m'}_3\cdot\bm{a})> 0,
 (\bm{m}_3\cdot\bm{a'})(\bm{m'}_3\cdot\bm{b})>0\}}
\\
&(I_{cc}+I_{ca}+I_{ac}+I_{aa}+I_{bb})
 \chi_{\{ (\bm{m}_3\cdot\bm{c'})(\bm{m}_3\cdot\bm{a'})> 0,
  (\bm{m'}_3\cdot\bm{c})(\bm{m'}_3\cdot\bm{a})> 0,
 (\bm{m}_3\cdot\bm{b'})(\bm{m'}_3\cdot\bm{b})<0\}}
\\
&\left.(I_{cb}+I_{ab}+I_{bc}+I_{ba})
 \chi_{\{ (\bm{m}_3\cdot\bm{c'})(\bm{m}_3\cdot\bm{a'})> 0,
  (\bm{m'}_3\cdot\bm{c})(\bm{m'}_3\cdot\bm{a})> 0,
 (\bm{m}_3\cdot\bm{b'})(\bm{m'}_3\cdot\bm{b})>0\}}\right]. 
  \end{split}
\label{Ktheta}
\end{equation}

\subsection{The excluded-volume potential of bent-core molecules}
Denote by $\bm{p}_1$ and $\bm{p}_2$ two unit vectors along the arms of molecule.
The excluded region of two molecules is the union of four spheroparallelograms 
$$
V_{ij}=O\bm{p}_i\bm{p'}_j+B_D,\qquad i,j=1,2.
$$ 
Thus the excluded volume can be written as 
$$
V=\sum|V_{ij}|-\sum|V_{ij}\cap V_{i'j'}|+\sum|V_{ij}\cap V_{i'j'}\cap V_{i''j''}|-|V_{11}\cap V_{12}\cap V_{21}\cap V_{22}|. 
$$
We have already known that 
$$
|V_{ij}|=2L^2D|\bm{p}_i\times\bm{p'}_j|+2\pi LD^2+\frac{4}{3}\pi D^3.
$$ 
So we only need to compute the volumes of the intersections above. 

When calculating the volume of a region $U$, we can write
$$
|U|=\int\d x\d y m(\Omega(x,y))
$$
where $m(\cdot)$ denotes the measure of a set and 
$\Omega(x,y)=\{z|(x,y,z)\in U\}$. 
Because $V_{ij}$ is convex, $\Omega(x,y)$ is an interval 
$[l_{ij}(x,y)$, $u_{ij}(x,y)]$ for $U=V_{ij}$. Thus
\begin{eqnarray*}
|V_{ij}\cap V_{i'j'}|&=&\int\d x\d y \big[\min\{u_{ij},u_{i'j'}\}-\max\{l_{ij},l_{i'j'}\}\big]^+,\\
|V_{ij}\cap V_{i'j'}\cap V_{i''j''}|&=&\int\d x\d y \big[\min\{u_{ij},u_{i'j'},u_{i''j''}\}-\max\{l_{ij},l_{i'j'},l_{i''j''}\}\big]^+,\\
|V_{11}\cap V_{12}\cap V_{21}\cap V_{22}|&=&\int\d x\d y 
\big[\min\{u_{11},u_{12},u_{21},u_{22}\}-\max\{l_{11},l_{12},l_{21},l_{22}\}\big]^+,
\end{eqnarray*}
where $x^+=\max\{x,0\}$. 
Now the problem turns into computing $l_{ij}(x,y)$ and $u_{ij}(x,y)$. 

Put one molecule in the plane $xOy$ with the arrowhead at $O$ and $\bm{m}_1$ 
along $-x$. Then $\bm{p}_{1,2}=L(\cos\frac{\theta}{2},\pm\sin\frac{\theta}{2},0)$.
We describe how to compute $u(x,y)$ of the spheroparallelogram $O\bm{p}_1\bm{
p'}_1+B_D$, where $\bm{p'}_1/L=(p,q,r)$ is a unit vector. A spheroparallelogram 
consists of a parallelpiped, four half cylinders at each edge of parallelogram, 
and four corners, each of which is enclosed by two planes and a sphere. 
Classify $u(x,y)$ into three cases by where $(x,y,u(x,y))$ lies: the 
parallelpiped; one of the four half cylinders; one of the four spheres. 
For the first case, the distance of $(x,y,u(x,y))$ to plane 
$O\bm{p}_1\bm{p'}_1$ equals to $D$. The normal vector of $O\bm{p}_1\bm{p'}_1$ is
$$
\bm{p}_1\times\bm{p'}_1=(A,B,C)=(r\sin\frac{\theta}{2},-r\cos\frac{\theta}{2}, 
q\cos\frac{\theta}{2}-p\sin\frac{\theta}{2}).
$$
Hence 
$$
u(x,y)=\frac{D\sqrt{A^2+B^2+C^2}}{|C|}-\frac{Ax+By}{C}. 
$$
For the second case, the distance equals to $D$ bewteen $(x,y,u(x,y))$ and the 
axis of one of the four half cylinders. Thereby $u(x,y)$ is the larger root of 
$$
(x-x_0)^2+(y-y_0)^2+\big(u(x,y)-z_0\big)^2-\big[a(x-x_0)+b(y-y_0)+c(u(x,y)-z_0)\big]^2=D^2.
$$
In the above, $(x_0,y_0,z_0)$ is any point on the axis, which may take $O$ or
$\Big(L(\cos\frac{\theta}{2}+p),L(\sin\frac{\theta}{2}+q),Lr\Big)$; 
$(a,b,c)$ is the unit vector along the axis, which may take
$(\cos\frac{\theta}{2},\sin\frac{\theta}{2},0)$ or $(p,q,r)$. 
For the third case, $u(x,y)$ is the larger root of 
$$
(x-x_0)^2+(y-y_0)^2+(z-z_0)^2=D^2
$$
where $(x_0,y_0,z_0)$ is one of the four vertices of the parallelogram 
$O\bm{p}_1\bm{p'}_1$. 

\begin{figure}
\centering
\includegraphics[width=0.55\textwidth,keepaspectratio]{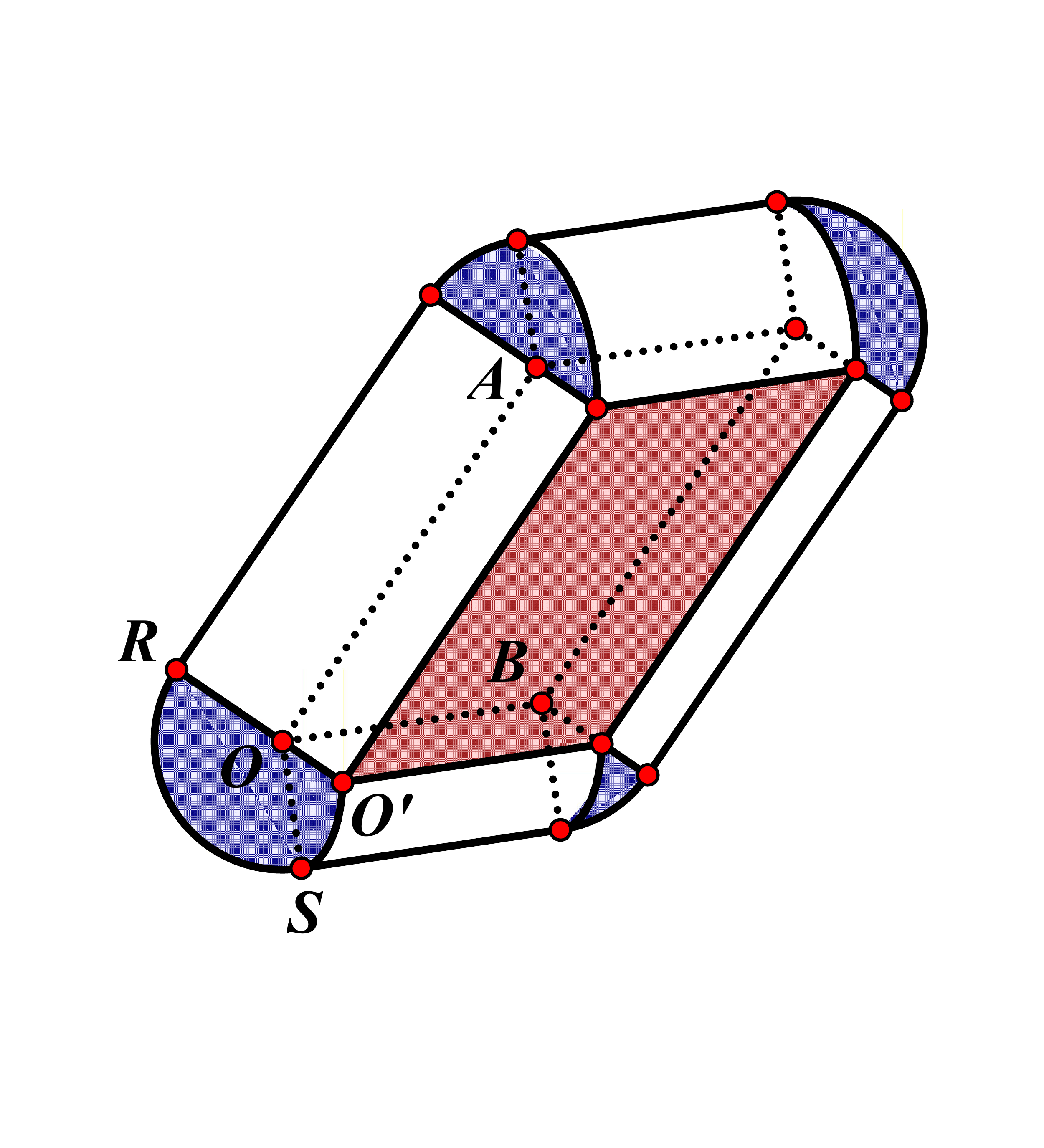}
\caption{Integration regions, divided into three cases. }\label{region}
\end{figure}
All remaining is to clarify the region of three cases. In Fig.\ref{region}, 
they are coloured by red, white and blue respectively. The region contains 
all points whose distance to the central parallelogram (drawn in 
dotted line in Fig.\ref{region}), which is spanned by 
$\overrightarrow{OA}=(L\cos\frac{\theta}{2},L\sin\frac{\theta}{2})$ 
and $\overrightarrow{OB}=(Lp,Lq)$, is no more than $D$. 
It consists of the central parallelogram, four rectangles at each edge and 
four sectors at each corner. The red region is the projection to plane $xOy$ 
of the parallelogram $O_1\bm{p}_1\bm{p'}_1$ with 
$$
\overrightarrow{OO_{1}}=D\mbox{sgn}C\frac{\bm{p}_1\times\bm{p'}_1}
{|\bm{p}_1\times\bm{p'}_1|}.
$$
It is obtained by shifting the central 
parallelogram along $\overrightarrow{OO'}$, where $O'$ locates at 
$$
\frac{D\mbox{sgn}(q\cos\frac{\theta}{2}-p\sin\frac{\theta}{2})}
{\sqrt{1-(p\cos\frac{\theta}{2}+q\sin\frac{\theta}{2})^2}}
(r\sin\frac{\theta}{2},-r\cos\frac{\theta}{2}). 
$$
Two of the four white regions are rectangles. 
The other two white regions are enclosed by two line segments and two 
elliptical arcs. Each elliptical arc connects a vertex of the red region and 
a vertex of the rectangles on the boundary, such as $O'S$. It is the projection 
of a curve to $xOy$. The curve is part of the intersection of the sphere
$$
(x-x_0)^2+(y-y_0)^2+(z-z_0)^2=D^2
$$
and the plane
$$
p(x-x_0)+q(y-y_0)+r(z-z_0)=0
$$
where $(x_0,y_0,z_0)$ is one of the vertices of $O\bm{p}_1\bm{p'}_1$. 
By eliminating $z$, we get the equation of the curve: 
$$
\big[p(x-x_0)+q(y-y_0)\big]^2+r^2\big[(x-x_0)^2+(y-y_0)^2\big]=r^2D^2. 
$$
The blue regions are those enclosed by a line segment, an elliptical arc 
defined above, and a circular arc on the boundary. 

\textbf{Acknowledgements.} 
P. Zhang is partly supported by NSF of China under Grant 50930003 and 21274005.

\renewcommand\arraystretch{2.0}
\begin{landscape}
\begin{table}
  \caption{Coefficients: $\int_{SO_3}\d\nu fp^2_{ij}$}\tiny
  \begin{tabular}{|c||c|c|c|c|}
    \hline
    Function & $p^2_{11}$ & $p_{22}^2$ & $p_{12}^2$ & $p_{21}^2$ \\\hline

    $|\bm{m'}_3\cdot \bm{e}_a|$&
    $\frac{1}{8}\cos^2\frac{\theta}{2}+\frac{3}{16}\sin^2\frac{\theta}{2}$ &
    $\frac{3}{16}\cos^2\frac{\theta}{2}+\frac{1}{8}\sin^2\frac{\theta}{2}$ &
    $\frac{1}{8}\cos^2\frac{\theta}{2}+\frac{3}{16}\sin^2\frac{\theta}{2}$ &
    $\frac{3}{16}\cos^2\frac{\theta}{2}+\frac{1}{8}\sin^2\frac{\theta}{2}$
    \\\hline

    $|\bm{m'}_3\cdot \bm{e}_b|$&
    $\frac{1}{8}\cos^2\frac{\theta}{2}+\frac{3}{16}\sin^2\frac{\theta}{2}$ &
    $\frac{3}{16}\cos^2\frac{\theta}{2}+\frac{1}{8}\sin^2\frac{\theta}{2}$ &
    $\frac{1}{8}\cos^2\frac{\theta}{2}+\frac{3}{16}\sin^2\frac{\theta}{2}$ &
    $\frac{3}{16}\cos^2\frac{\theta}{2}+\frac{1}{8}\sin^2\frac{\theta}{2}$
    \\\hline

    $|\bm{m'}_3\cdot \bm{e}_c|$ & $\frac{3}{16}$ & $\frac{1}{8}$ & 
    $\frac{3}{16}$ & $\frac{1}{8}$
    \\\hline

    $|\bm{m}_3\cdot \bm{e'}_a|$&
    $\frac{1}{8}\cos^2\frac{\theta}{2}+\frac{3}{16}\sin^2\frac{\theta}{2}$ &
    $\frac{3}{16}\cos^2\frac{\theta}{2}+\frac{1}{8}\sin^2\frac{\theta}{2}$ &
    $\frac{3}{16}\cos^2\frac{\theta}{2}+\frac{1}{8}\sin^2\frac{\theta}{2}$ &
    $\frac{1}{8}\cos^2\frac{\theta}{2}+\frac{3}{16}\sin^2\frac{\theta}{2}$ 
    \\\hline

    $|\bm{m}_3\cdot \bm{e'}_b|$&
    $\frac{1}{8}\cos^2\frac{\theta}{2}+\frac{3}{16}\sin^2\frac{\theta}{2}$ &
    $\frac{3}{16}\cos^2\frac{\theta}{2}+\frac{1}{8}\sin^2\frac{\theta}{2}$ &
    $\frac{3}{16}\cos^2\frac{\theta}{2}+\frac{1}{8}\sin^2\frac{\theta}{2}$ &
    $\frac{1}{8}\cos^2\frac{\theta}{2}+\frac{3}{16}\sin^2\frac{\theta}{2}$ 
    \\\hline

    $|\bm{m}_3\cdot \bm{e'}_c|$ & $\frac{3}{16}$ & $\frac{1}{8}$ & 
    $\frac{1}{8}$ & $\frac{3}{16}$
    \\\hline

    $|\bm{e}_a\times \bm{e'}_a|$&
    $\pi\left(\frac{1}{16}\cos^4\frac{\theta}{2}+\frac{5}{64}\sin^4\frac{\theta}
    {2}+\frac{3}{16}\sin^2\frac{\theta}{2}\cos^2\frac{\theta}{2}\right)$ &
    $\pi\left(\frac{1}{16}\cos^4\frac{\theta}{2}+\frac{5}{64}\sin^4\frac{\theta}
    {2}+\frac{3}{16}\sin^2\frac{\theta}{2}\cos^2\frac{\theta}{2}\right)$ &
    $\pi\left(\frac{1}{16}\cos^4\frac{\theta}{2}+\frac{5}{64}\sin^4\frac{\theta}
    {2}+\frac{3}{16}\sin^2\frac{\theta}{2}\cos^2\frac{\theta}{2}\right)$ &
    $\pi\left(\frac{1}{16}\cos^4\frac{\theta}{2}+\frac{5}{64}\sin^4\frac{\theta}
    {2}+\frac{3}{16}\sin^2\frac{\theta}{2}\cos^2\frac{\theta}{2}\right)$\\\hline

    $|\bm{e}_a\times \bm{e'}_b|$&
    $\pi\left(\frac{1}{16}\cos^4\frac{\theta}{2}+\frac{5}{64}\sin^4\frac{\theta}
    {2}+\frac{3}{16}\sin^2\frac{\theta}{2}\cos^2\frac{\theta}{2}\right)$ &
    $\pi\left(\frac{1}{16}\cos^4\frac{\theta}{2}+\frac{5}{64}\sin^4\frac{\theta}
    {2}+\frac{3}{16}\sin^2\frac{\theta}{2}\cos^2\frac{\theta}{2}\right)$ &
    $\pi\left(\frac{1}{16}\cos^4\frac{\theta}{2}+\frac{5}{64}\sin^4\frac{\theta}
    {2}+\frac{3}{16}\sin^2\frac{\theta}{2}\cos^2\frac{\theta}{2}\right)$ &
    $\pi\left(\frac{1}{16}\cos^4\frac{\theta}{2}+\frac{5}{64}\sin^4\frac{\theta}
    {2}+\frac{3}{16}\sin^2\frac{\theta}{2}\cos^2\frac{\theta}{2}\right)$\\\hline

    $|\bm{e}_a\cdot \bm{e'}_c|$&
    $\pi\left(\frac{3}{32}\cos^2\frac{\theta}{2}+\frac{5}{64}\sin^2\frac{\theta}{2}\right)$ &
    $\pi\left(\frac{3}{32}\cos^2\frac{\theta}{2}+\frac{1}{16}\sin^2\frac{\theta}{2}\right)$ &
    $\pi\left(\frac{1}{16}\cos^2\frac{\theta}{2}+\frac{3}{32}\sin^2\frac{\theta}{2}\right)$ &
    $\pi\left(\frac{5}{64}\cos^2\frac{\theta}{2}+\frac{3}{32}\sin^2\frac{\theta}{2}\right)$ 
    \\\hline

    $|\bm{e}_b\times \bm{e'}_a|$&
    $\pi\left(\frac{1}{16}\cos^4\frac{\theta}{2}+\frac{5}{64}\sin^4\frac{\theta}
    {2}+\frac{3}{16}\sin^2\frac{\theta}{2}\cos^2\frac{\theta}{2}\right)$ &
    $\pi\left(\frac{1}{16}\cos^4\frac{\theta}{2}+\frac{5}{64}\sin^4\frac{\theta}
    {2}+\frac{3}{16}\sin^2\frac{\theta}{2}\cos^2\frac{\theta}{2}\right)$ &
    $\pi\left(\frac{1}{16}\cos^4\frac{\theta}{2}+\frac{5}{64}\sin^4\frac{\theta}
    {2}+\frac{3}{16}\sin^2\frac{\theta}{2}\cos^2\frac{\theta}{2}\right)$ &
    $\pi\left(\frac{1}{16}\cos^4\frac{\theta}{2}+\frac{5}{64}\sin^4\frac{\theta}
    {2}+\frac{3}{16}\sin^2\frac{\theta}{2}\cos^2\frac{\theta}{2}\right)$\\\hline

    $|\bm{e}_b\times \bm{e'}_b|$&
    $\pi\left(\frac{1}{16}\cos^4\frac{\theta}{2}+\frac{5}{64}\sin^4\frac{\theta}
    {2}+\frac{3}{16}\sin^2\frac{\theta}{2}\cos^2\frac{\theta}{2}\right)$ &
    $\pi\left(\frac{1}{16}\cos^4\frac{\theta}{2}+\frac{5}{64}\sin^4\frac{\theta}
    {2}+\frac{3}{16}\sin^2\frac{\theta}{2}\cos^2\frac{\theta}{2}\right)$ &
    $\pi\left(\frac{1}{16}\cos^4\frac{\theta}{2}+\frac{5}{64}\sin^4\frac{\theta}
    {2}+\frac{3}{16}\sin^2\frac{\theta}{2}\cos^2\frac{\theta}{2}\right)$ &
    $\pi\left(\frac{1}{16}\cos^4\frac{\theta}{2}+\frac{5}{64}\sin^4\frac{\theta}
    {2}+\frac{3}{16}\sin^2\frac{\theta}{2}\cos^2\frac{\theta}{2}\right)$\\\hline

    $|\bm{e}_b\cdot \bm{e'}_c|$&
    $\pi\left(\frac{3}{32}\cos^2\frac{\theta}{2}+\frac{5}{64}\sin^2\frac{\theta}{2}\right)$ &
    $\pi\left(\frac{3}{32}\cos^2\frac{\theta}{2}+\frac{1}{16}\sin^2\frac{\theta}{2}\right)$ &
    $\pi\left(\frac{1}{16}\cos^2\frac{\theta}{2}+\frac{3}{32}\sin^2\frac{\theta}{2}\right)$ &
    $\pi\left(\frac{5}{64}\cos^2\frac{\theta}{2}+\frac{3}{32}\sin^2\frac{\theta}{2}\right)$ 
    \\\hline

    $|\bm{e}_c\cdot \bm{e'}_a|$&
    $\pi\left(\frac{3}{32}\cos^2\frac{\theta}{2}+\frac{5}{64}\sin^2\frac{\theta}{2}\right)$ &
    $\pi\left(\frac{3}{32}\cos^2\frac{\theta}{2}+\frac{1}{16}\sin^2\frac{\theta}{2}\right)$ &
    $\pi\left(\frac{5}{64}\cos^2\frac{\theta}{2}+\frac{3}{32}\sin^2\frac{\theta}{2}\right)$ &
    $\pi\left(\frac{1}{16}\cos^2\frac{\theta}{2}+\frac{3}{32}\sin^2\frac{\theta}{2}\right)$ 
    \\\hline

    $|\bm{e}_c\cdot \bm{e'}_c|$ & $\frac{5}{64}\pi$ & $\frac{1}{16}\pi$ & 
    $\frac{3}{32}\pi$ & $\frac{3}{32}\pi$
    \\\hline
  \end{tabular}
\label{projcoe}
\end{table}
\end{landscape}

\end{document}